\newif\ifarxiv
\definecolor{Mulberry}{rgb}{0.77, 0.29, 0.55}
\tiny\color{gray}, 
  \def\orcidID#1{{\href{http://orcid.org/#1}{\protect\raisebox{-1.25pt}{\protect\includegraphics{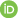}}}}}
  \renewcommand\appendixsectionformat[2]{Additional Material}
\newcommand\true{\mathit{true}}
\newcommand\false{\mathit{false}}
\tikzset{st/.style={font=\ttfamily,shape=rectangle,rounded corners=.5em,fill=gray!40,inner xsep=.3em,inner ysep=0em,text height=2ex,text depth=.6ex}}
\newcommand{\stsmcol}[2]{\tikz[baseline]{\node[st,fill=#2] at (0,.64ex){\hspace{.3em}\texttt{\strut\small #1}\hspace{.3em}\strut};}}
\definecolor{thinit}{RGB}{255, 150, 150}
\definecolor{th1}{RGB}{130, 200, 235}
\definecolor{th2}{RGB}{180, 150, 235}
\newcommand\st{\mathit{st}}
\newcommand\Command{\mathbf{Command}}
\newcommand\CommandPlus{\Command \cup \{\Omega,\lightning\}}
\newcommand\atomicstmt{\mathbf{AtomicStmt}}
\newcommand\prog{\mathcal{P}}
\newcommand\threadTemplates{\mathbf{Templates}}
\newcommand\main{\textup{\texttt{main}}\xspace}
\newcommand\vars{\textup{\textbf{Var}}}
\newcommand\globalVars{\textbf{Globals}}
\newcommand\body[1]{\mathit{body}_{{#1}}}
\newcommand\proglang{\textsc{Conc}\xspace}
\newcommand{\sem}[1]{\llbracket#1\rrbracket}
\newcommand\semtrans[1]{\xrightarrow{#1}}
\newcommand\srule[1]{(\textsc{#1})}
\newcommand\petriProg[2]{\mathit{P\!N}_{\!#1}(#2)}
\newcommand\spec{\mathcal{S}}
\newcommand\safe{\mathsf{safe}}
\newcommand\bound{\mathsf{bound}}
\newcommand\petrirel[1]{\xhookrightarrow{#1}}
\newcommand\insuff[1]{\mathsf{insuff}^{#1}}
\newcommand\inuse[2]{\mathsf{inUse}_{#2}^{#1}}
\newcommand\notinuse[2]{\mathsf{notInUse}_{#2}^{#1}}
\newcommand\instvar[3]{\ensuremath{\texttt{#1}^{#2}_{#3}}}
\newcommand\idvar[2]{\instvar{id}{#1}{#2}}
\newcommand\petristate{\mathbf{State}}
\newcommand{\marki}{\operatorname{mark}}
\newcommand{\insta}{\operatorname{inst}}
\newcommand{\confOp}{\operatorname{conf}}
\newcommand{\conf}[2]{\confOp(#1, #2)}
\newcommand\aug[1]{\bar{#1}}
\newcommand\state[2]{\insta(#1,#2)}
\newcommand\init{\mathit{init}}
\newcommand{\N}{{\mathcal{N}}}
\newcommand{\lab}{\lambda}
\newcommand{\fire}[1]{\vartriangleright_{#1}}
\newcommand\occ{\mathcal{O}}
\newcommand\toolname{\textsc{Ultimate}\xspace}
\begin{document}
%
\title{Petrification: Software Model Checking for Programs with Dynamic Thread Management \ifarxiv\\(Extended Version)\fi}
\titlerunning{Petrification: Software Model Checking with Dynamic Thread Management}
%
\author{Matthias Heizmann\textsuperscript{(\Letter)}~\orcidID{0000-0003-4252-3558} \and Dominik~Klumpp~\orcidID{0000-0003-4885-0728} \and Lars Nitzke \and Frank~Sch\"ussele~\orcidID{0000-0002-5656-306X}}

\authorrunning{Matthias Heizmann \and Dominik~Klumpp \and Lars Nitzke \and Frank~Sch\"ussele}
%
\institute{University of Freiburg, Freiburg im Breisgau, Germany\\
\email{\{heizmann,klumpp,schuessf\}@informatik.uni-freiburg.de}\\
\email{lars.nitzke@mailfence.com}}
\maketitle
\begin{abstract}
We address the verification problem for concurrent program that dynamically create (fork) new threads or destroy (join) existing threads.
We present a reduction
to the verification problem for concurrent programs with a fixed number of threads.
More precisely, we present
\emph{petrification},
a transformation from programs with dynamic thread management to an existing, Petri net-based formalism for programs with a fixed number of threads.
Our approach is implemented in a software model checking tool for C programs that use the \emph{pthreads} API.


\keywords{Concurrency \and Fork-Join \and Verification \and Petri Nets \and pthreads.}
\end{abstract}
%
%
%

\section{Introduction}
%
%
%
%
%
We address the verification problem for concurrent programs with \emph{dynamic thread management}.
Such programs start with a single main thread,
and dynamically create (\emph{fork}) and destroy (\emph{join}) additional concurrently executing threads.
%
%
\begin{wrapfigure}[16]{r}{0.33\textwidth}
\vspace{-2em}
\begin{subfigure}[t]{0.33\textwidth}
\begin{lstlisting}
c := 0; i := 0;
while (true) {
  fork i w();
  if (i > 0) {
    join i-1;
  }
  i := i + 1;
}
\end{lstlisting}
\vspace{-2mm}
\caption{The initial \texttt{main} thread}
\end{subfigure}\\

\begin{subfigure}[t]{0.33\textwidth}
\begin{lstlisting}
c := c + i;
assert c <= 2 * i;
c := c - i;
\end{lstlisting}
\vspace{-2mm}
\caption{The worker thread \texttt{w}}
\end{subfigure}
\vspace{-2mm}
    \caption{Program with dynamic thread management}
    \label{fig:example}
\end{wrapfigure}
The number of threads changes during execution and may depend
on input data. 
As an example, consider the program in \cref{fig:example}.
In a loop, the \texttt{main} thread creates new worker threads and assigns them the thread ID \texttt{i},
i.e., the current iteration.
In all iterations but the first, the \texttt{main} thread also \emph{joins} the thread with ID \texttt{i-1},
i.e., the thread created in the previous iteration.
The \texttt{join} statement blocks until that thread terminates, and destroys the terminated thread.
Dynamic thread management in this style is widely used in practice.
%
For instance, the threads extension of the POSIX standard~\cite{pthreads} specifies such an API (commonly known as \emph{pthreads}),
as do Java~\cite{java:threads} and the .NET framework~\cite{dotnet:threads}.
%

%
Automated verification of programs with dynamic thread management is
challenging.
%
The control flow of the program (which code will be executed in what order) is not immediately apparent from the syntax.
In fact, it is already an undecidable problem
to determine statically \emph{how many} and \emph{which} threads will be created at runtime,
and \emph{when} or \emph{if} a thread will be joined.
%
For instance, for the program in \cref{fig:example},
a control flow analysis must determine how many instances of the worker thread \texttt{w} can be active at the same time,
and consider all interleavings of their actions.
Furthermore, the analysis has to figure out that the thread joined in line~5 is always the thread created in the previous iteration,
hence the program's control flow does not include interleavings in which the joined thread executes another action after the \texttt{join} statement.
%
This difficulty of determining the control flow presents a challenge for many \emph{software model checking} techniques,
which typically require a representation of a program's control flow as input.

Given a description of the control flow,
software model checkers reason about the program's data to construct an over-approximation of the reachable states.
This allows them to either prove the correctness of the program (if the over-approximation does not admit a specification violation)
or find bugs.
Here, we encounter a second challenge for programs with dynamic thread management:
Threads may have thread-local variables,
and program states must assign a value to $\texttt{x}_{t}$ for each thread $t$ and each local variable \texttt{x}.
Hence, if the number of threads created at runtime is unknown,
then so is the program's state space.
%
This makes it challenging for software model checkers
to construct over-approximations of the reachable states that are precise enough to show correctness.

In this work, we present an approach for the automated verification of programs with dynamic thread management,
which overcomes these challenges.
Our focus is on programs where the number of thread that can be active at the same time is bounded.
We call the maximum number of threads (with the same thread template) that can be active at the same time the \emph{thread width} of the program.
%
Programs with a bounded thread width represent a sweet spot for automated verification:
Firstly, programs with a bounded thread width are still tractable for software model checking.
%
%
Secondly, the number of active threads is often small in practice
because programs are most efficient if the number of active threads is at most the number of CPU cores.
This is in contrast to many other discrete characteristics of executions,
which can grow unboundedly depending on input data (e.g. length of executions, and consequently, the overall number of threads)
or are not under the programmer's control (e.g. number of context switches).

In order to apply software model checking techniques to the verification of programs with dynamic thread management,
and overcome the challenges laid out above,
we reduce the verification problem for a program with dynamic thread management
to a series of verification problems with a fixed number of threads.
The reduction is \emph{sound}:
If our approach concludes that a program with dynamic thread management is correct, then \emph{all} executions of the program are indeed correct.
To ensure soundness for all programs,
our approach attempts to identify the given program's thread width (by trying out different values)
and then \emph{verifies} that the program indeed has a certain thread width.
We again reduce this verification problem to a verification problem with a fixed number of threads.
Note that the thread width is not supplied by the user of the analysis.
Instead, the thread width is a property of the input program, derived from the program's semantics.
%
Our approach is \emph{complete} for programs with a bounded thread width, and for incorrect programs.
The key technical step in our approach is
\emph{petrification}, a transformation from programs with dynamic thread management to \emph{Petri programs}~\cite{vmcai2021} with a fixed number of threads.
Petrification separates the control flow of the program (including thread management) from data aspects,
and imposes a given \emph{thread~limit}~$\beta$ on the number of active threads.
The control flow of a Petri program is encoded as a Petri net,
while the manipulation of data (i.e., variable values) is expressed by labeling transitions of the Petri net with assignments and guards.
The resulting Petri program faithfully represents executions of the original program where at most $\beta$ threads are active at the same time.

%

We define two separate specifications for the petrified program:
a \emph{safety specification} which expresses correctness of all executions encoded by the Petri program,
and a \emph{bound specification}, expressing that
the program's thread width is bounded by $\beta$.
Both specifications can be verified independently, using an existing algorithm for the verification of Petri programs~\cite{vmcai2021}.
We thereby reduce the existence of a bound on the program's thread width to a safety specification for Petri programs;
checking this specification does not require any changes to the underlying verification algorithm.
We present a verification algorithm for programs with dynamic thread management,
that repeatedly petrifies a given program (with different thread limits $\beta$)
and invokes a Petri program verification algorithm~\cite{vmcai2021} to check the safety and bound specifications.
%
We investigate several variants of this verification algorithm and their differences.

Our approach is implemented in the program analysis and verification framework \textsc{Ultimate}.
The implementation verifies concurrent C programs using the POSIX threads (\emph{pthreads}) standard for multi-threading.
Our approach is practicable and can be successfully used for automated verification.

\tikzstyle{transition}=[rectangle,thick,fill=black,minimum height=6mm,minimum width=3.5pt,inner xsep=0]

\begin{figure}
	\begin{tikzpicture}[thick,scale=1,node distance=0.5cm,
				trans/.style={->,>=angle 45,thick},
                place/.style={draw,circle,inner sep=1.5mm}
				]

        \node[place,label=below:$\ell_0$] (l0) {};
        \node[token] (l0token) at (l0) {};
        \node[transition,right=0.5cm of l0,label=above:$\stsmcol{c:=0; i:=0}{thinit}$] (t1) {};
        \node[place,right=0.5cm of t1,label=below:$\ell_{1}$] (l1) {};

        \node[place,label=above:$\notinuse{w}{1}$,above=3cm of l0,xshift=1.8cm] (n1) {};
        \node[token] (n1token) at (n1) {};
        \node[place,label=below:$\notinuse{w}{2}$,below=3cm of l0,xshift=1.8cm] (n2) {};
        \node[token] (n2token) at (n2) {};

        \node[transition,right=1.5cm of n1,yshift=-0.6cm,label={[left, xshift=-0.2cm, yshift=-0.4cm]$\stsmcol{\idvar{w}{1}:=i}{thinit}$}] (f1) {};
        \node[transition,right=1.5cm of n2,yshift=0.6cm,label={[left, xshift=-0.2cm, yshift=-0.2cm]$\stsmcol{\idvar{w}{2}:=i}{thinit}$}] (f2) {};
        \node[transition,right=0.9cm of l1,label=above:$\stsmcol{true}{thinit}$] (fe) {};

        \node[place,label=above:$\inuse{w}{1}$,right=2cm of f1] (i1) {};
        \node[place,label=below:$\inuse{w}{2}$,right=2cm of f2] (i2) {};
        \node[place,right=0.9cm of fe,label=below:$\insuff{w}$, accepting] (insuff) {};

        \node[place,right=1.4cm of insuff,label=below:$\ell_{2}$] (l2) {};
        \node[transition,right=0.4cm of l2,label=below:$\stsmcol{i>0}{thinit}$] (if) {};
        \node[transition,above=0.5cm of if,label=above:$\stsmcol{i<=0}{thinit}$] (else) {};
        \node[place,right=0.4cm of if,label=right:$\ell_{3}$] (l3) {};
        \node[place,right=0.9cm of l3,label=right:$\ell_{4}$] (l4) {};
        \node[transition,below=1cm of if,label=below:$\stsmcol{i:=i+1}{thinit}$] (inc) {};

        \node[transition,right=7.6cm of n1,label=right:$\stsmcol{\idvar{w}{1}==i-1}{thinit}$] (j1) {};
        \node[transition,right=7.6cm of n2,label=right:$\stsmcol{\idvar{w}{2}==i-1}{thinit}$] (j2) {};

        \node[place,label=above:$\ell_{0,1}$,xshift=0.4cm,above=2.5cm of f1] (w01) {};
        \node[transition,right=0.5cm of w01,label=below:$\stsmcol{c:=c+i}{th1}$] (t01) {};
        \node[place,label=above:$\ell_{1,1}$,right=0.5cm of t01] (w11) {};
        \node[transition,right=0.5cm of w11,label=below:$\stsmcol{c<=2*i}{th1}$] (t11) {};
        \node[transition,above=0.5cm of t11,label=above:$\stsmcol{c>2*i}{th1}$] (te1) {};
        \node[place,label=right:$\lightning_1$,right=0.5cm of te1,accepting] (we1) {};
        \node[place,label=above:$\ell_{2,1}$,right=0.5cm of t11] (w21) {};
        \node[transition,right=0.5cm of w21,label=below:$\stsmcol{c:=c-i}{th1}$] (t21) {};
        \node[place,label=above:$\ell_{3,1}$,right=0.5cm of t21] (w31) {};
        \draw [trans] (f1) to (w01);
        \draw [trans] (w01) to (t01);
        \draw [trans] (t01) to (w11);
        \draw [trans] (w11) to (t11);
        \draw [trans] (w11) to (te1);
        \draw [trans] (te1) to (we1);
        \draw [trans] (t11) to (w21);
        \draw [trans] (w21) to (t21);
        \draw [trans] (t21) to (w31);

        \node[place,label=below:$\ell_{0,2}$,xshift=0.5cm,below=2.5cm of f2] (w02) {};
        \node[transition,right=0.5cm of w02,label=above:$\stsmcol{c:=c+i}{th2}$] (t02) {};
        \node[place,label=below:$\ell_{1,2}$,right=0.5cm of t02] (w12) {};
        \node[transition,right=0.5cm of w12,label=above:$\stsmcol{c<=2*i}{th2}$] (t12) {};
        \node[transition,below=0.5cm of t12,label=below:$\stsmcol{c>2*i}{th2}$] (te2) {};
        \node[place,label=right:$\lightning_2$,right=0.5cm of te2,accepting] (we2) {};
        \node[place,label=below:$\ell_{2,1}$,right=0.5cm of t12] (w22) {};
        \node[transition,right=0.5cm of w22,label=above:$\stsmcol{c:=c-i}{th2}$] (t22) {};
        \node[place,label=below:$\ell_{3,2}$,right=0.5cm of t22] (w32) {};
        \draw [trans] (f2) to (w02);
        \draw [trans] (w02) to (t02);
        \draw [trans] (t02) to (w12);
        \draw [trans] (w12) to (t12);
        \draw [trans] (w12) to (te2);
        \draw [trans] (te2) to (we2);
        \draw [trans] (t12) to (w22);
        \draw [trans] (w22) to (t22);
        \draw [trans] (t22) to (w32);
        \draw [trans] (w31) to (j1);

		\draw [trans] (l0) to (t1);
		\draw [trans] (t1) to (l1);
        \draw [trans] (n1) to (f1);
		\draw [trans] (l1) to (f1);
        \draw [trans] (l1) to (f2);
        \draw [trans] (i1) to (f2);
        \draw [trans] (n2) to (f2);
        \draw [trans] (l1) to (fe);
        \draw [trans] (i1) to (fe);
        \draw [trans] (i2) to (fe);
        \draw [trans] (fe) to (insuff);
        \draw [trans] (f1) to (i1);
        \draw [trans] (f1) to (l2);
        \draw [trans] (f2) to (i1);
        \draw [trans] (f2) to (i2);
        \draw [trans] (f2) to (l2);
        \draw [trans] (l2) to (if);
        \draw [trans, bend left=40] (l2) to (else);
        \draw [trans] (if) to (l3);
        \draw [trans, bend left=25] (else) to (l4);
        \draw [trans,bend left=30] (l4) to (inc);
        \draw [trans,bend left=17] (inc) to (l1);
        \draw [trans] (i1) to (j1);
        \draw [trans] (l3) to (j1);
        \draw [trans,bend right=16] (j1) to (n1);
        \draw [trans] (j1) to (l4);
        \draw [trans] (l3) to (j2);
        \draw [trans] (i2) to (j2);
        \draw [trans,bend left=16] (j2) to (n2);
        \draw [trans] (j2) to (l4);
        \draw [trans] (w32) to (j2);

	\end{tikzpicture}
    \caption{Petri net for the example program from \cref{fig:example}, with two instances for the worker thread.}
    \label{fig:petri-net}
\end{figure}
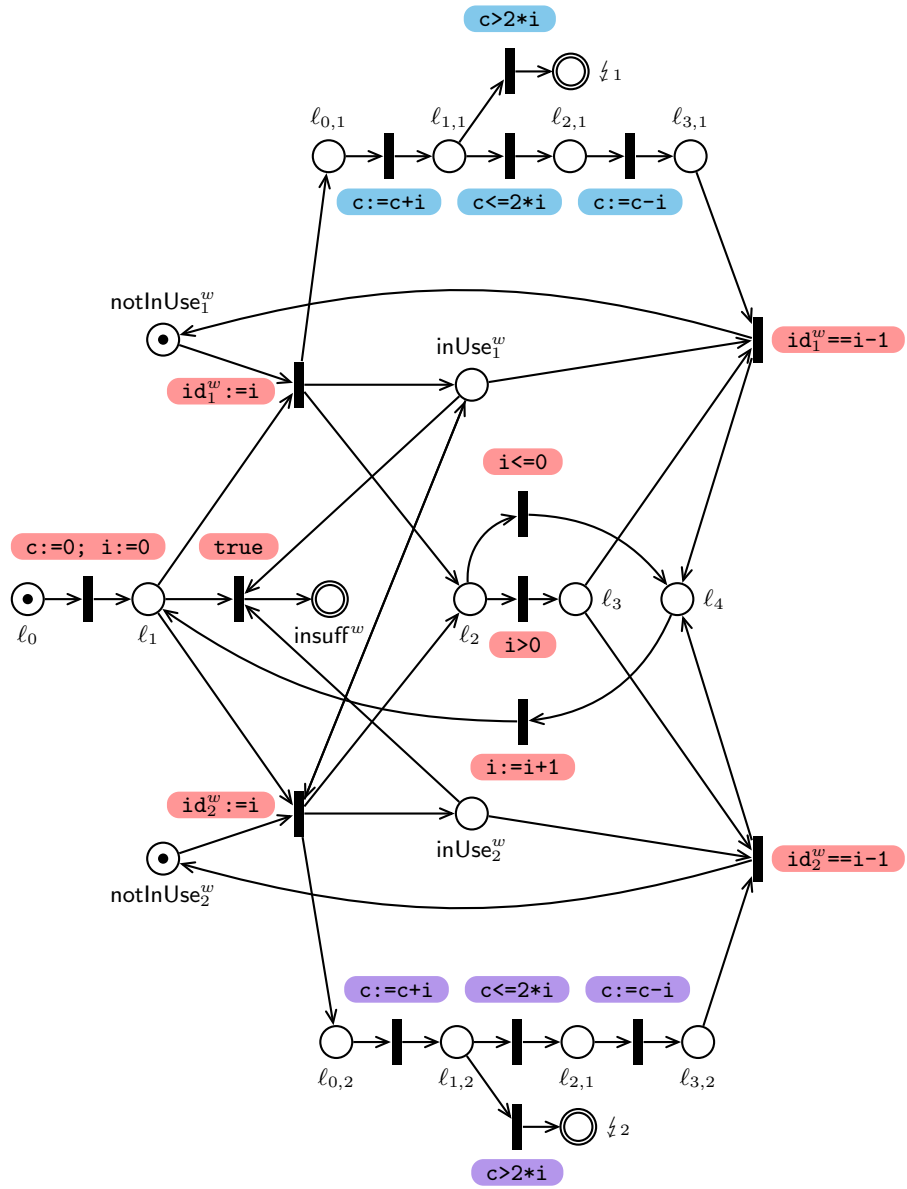

\vspace*{-5mm}
\subsubsection{Example.}
As an illustration of our approach, consider again the program in \cref{fig:example}.
The worker threads each perform the same computation involving the global variables \texttt{c} and \texttt{i},
and assert that $\texttt{c}\leq 2\cdot\texttt{i}$.
The verification problem consists of showing that this condition always holds when any worker thread executes the \texttt{assert} statement.

As the first verification step, we \emph{petrify} the program.
If we chose a thread limit $\beta=2$ -- i.e., at most two worker threads can run at the same time --,
petrification produces the Petri program in \cref{fig:petri-net}.
We see the main thread (in the middle, in red), as well as two instances of the worker thread (at the top resp.\ at the bottom, in blue resp.\ purple).
Transitions (black vertical bars) encode the possible control flow, and are annotated with assignments and guards.
The thread ID is tracked through specially-introduced variables \idvar{w}{1} and \idvar{w}{2}.
The transitions labeled \stsmcol{\texttt{\idvar{w}{k}:=i}}{thinit} correspond to the \texttt{fork} statement,
whereas the transitions labeled \stsmcol{\texttt{\idvar{w}{k}==i-1}}{thinit} correspond to the \texttt{join} statement.

We employ the Petri program verification algorithm~\cite{vmcai2021}
to show that the petrified program satisfies both its bound and safety specification.
Intuitively, this means that no run (or firing sequence) of the Petri net,
that also obeys the assignments and guards of the transitions (in a sense made formal below),
can reach the place $\insuff{w}$ (for the bound specification) resp.\ the place $\lightning_1$ or $\lightning_2$ (for the safety specification).
Intuitively, the former holds because at the start of any iteration, only the worker thread created in the previous iteration is active.
The latter then follows, because all worker threads created in even earlier iterations must have terminated,
and their overall effect was to either leave \texttt{c} unchanged or decrease it (if \texttt{i} was incremented in between the two assignments to \texttt{c}).
If $n\in\{0,1,2\}$ active worker threads have executed the first but not the second assignment,
we have $\texttt{c} \leq n \cdot \texttt{i}$.
%
We conclude that the program in \cref{fig:example} is correct.

\vspace*{-5mm}
\subsubsection{Related Work.}
%

One approach for the analysis of concurrent programs found in the literature is \emph{bounded verification}.
%
%
In bounded verification, the user of an analysis supplies some bound,
and the analysis then considers only those behaviours of the program below the given bound.
For instance, one may only analyze executions up to a certain length~\cite{dartagnan:cav,deagle,dartagnan:sv-comp}, with at most a given number of threads~\cite{cseq:unbounded-switch}, or a with a limited number of context switches~\cite{cseq:cav,kiss}.
In each case, the analysis only covers a fragment of the program behaviours.
For programs that exceed the imposed limit, such analyses are often successful in finding bugs, but cannot soundly conclude that the program is correct.
By contrast, we present an approach
that uses a form of bounded verification as a sub-procedure (the bound is given by the thread limit $\beta$),
and yet gives full correctness guarantees.



Our approach shares some similarities with works that combine bounded verification
with the search for a \emph{cutoff point}~\cite{DBLP:conf/podc/ClarkeG87,DBLP:conf/concur/ClarkeTTV04,DBLP:conf/cade/EmersonK00,DBLP:conf/icse/YangL10}:
A finite bound such that correctness of the fragment of program behaviours up to the cutoff point
also implies correctness of all other program behaviours.
\todo{explain more?}
In contrast to such works, the satisfaction of a certain thread width bound by a program is wholly independent of the verified property, the logic in which it is expressed, or how it is proven.
However, given a suitable cutoff point for the thread width bound,
one may employ petrification to produce and subsequently verify the corresponding bounded instance.

There are many works concerned with the verification of \emph{parametrized programs},
i.e., program with an unbounded number of active threads.
For instance, in \emph{thread-modular verification}, the idea is
to find a modular proof that allows for generalization to any number of threads.
For instance, thread-modular proofs at level $k$~\cite{jochen:thread-modular} generalize the non-interference condition of Owicki and Gries~\cite{owicki-gries}.
A thread-modular proof for a program with $k$ threads establishes correctness of the program for any number of threads.
Other forms of compositional verification, such as rely-guarantee reasoning, follow a similar approach.
A challenge in this setting is that such proof methods are often incomplete, and
while there is some work in automating the search for such a proof (e.g. the thread-modular abstract interpreter \textsc{Goblint}~\cite{goblint:sv-comp}),
software model checking based on compositional proofs is not yet as mature. 
%
An additional challenge is that many such works are based on the
 setting of parametric programs, where an arbitrary number (not controlled by the program) of threads  execute concurrently (i.e., all threads start at the same time).
Supporting dynamic thread management often requires further work~\cite{goblint:thread-ids}.

On a technical level, the work most closely related to our approach~\cite{cpachecker:threads}
describes the implementation of the software model checker \textsc{CpaChecker}~\cite{cpachecker}.
Upfront, this implementation creates a fixed number of procedure clones.
It then performs a specialized configurable program analysis (CPA),
which treats the cloned procedures as threads.
In contrast, we describe a modular approach that
separates the reduction to programs with a fixed number of threads from the analysis,
and allows for different verification algorithms to be applied.
Our approach is grounded in a formal semantics of a concurrent programming language,
which allows for theoretical analysis.

\vspace*{-5mm}
\subsubsection{Contributions.}
To summarize, our contributions are as follows:
\begin{itemize}
\item We identify the class of programs with a \emph{bounded thread width} as a sweet spot for software model checking,
both relevant and tractable.
\item We present an approach for the automated verification of programs with dynamic thread management,
  by reduction to a series of verification problems with a fixed number of threads.
  The approach is sound, and it is complete for programs with bounded thread width as well as incorrect programs.
  \item The key technical contribution behind this reduction is \emph{petrification},
    a construction that captures the control flow of programs with dynamic thread management (up to a given thread limit $\beta$)
    as a Petri program~\cite{vmcai2021} with a fixed number of threads.
  \item We implemented our approach in the program verification framework \textsc{Ultimate} and showed that it is successful in practice.
\end{itemize}

\vspace*{-5mm}
\subsubsection{Roadmap.}
\Cref{sec:language} presents the syntax and semantics of \proglang, a simple language with dynamic thread management, and defines the verification problem.
In \cref{sec:petrification}, we quickly recap the essential notions of Petri programs and then introduce petrification, our transformation from \proglang to Petri programs.
We present the overall verification algorithms based on petrification in \cref{sec:verification}.
\Cref{sec:implementation} discusses the implementation of our approach in a verification tool for C programs,
and evaluates the practical feasibility.
We conclude in \cref{sec:conclusion}.

\section{A language for dynamic thread creation}
\label{sec:language}

In this section, we introduce \proglang, a simple imperative language for concurrent programs.
\proglang captures the essence of dynamic thread management.
We define syntax and semantics of \proglang, as well as the corresponding verification problem.

\vspace*{-5mm}
\subsubsection{Syntax.}
\proglang programs consist of a number of \emph{thread templates},
which have a unique name, and a body specifying the code to be executed by a thread.
We denote the finite set of all possible thread template names by $\threadTemplates$.
Further, let $\vars$ be a set of program variable names.
We assume as given a language of expressions over variables in $\vars$, e.g., the expressions defined by the \textsc{Smt-Lib} standard~\cite{smtlib}.
Let $x$ range over variables,
$\vartheta$ range over thread templates,
$e_\mathsf{int}$ range over integer-valued expressions,
$e_\mathsf{bool}$ range over boolean-valued expressions,
and $e$ range over both integer- and boolean-valued expressions.
The syntax of \proglang commands is defined by the following grammar:
\goodbreak
\[
\begin{array}{rcl}
    C & ::= &  \texttt{$x$:=$e$}
    ~|~  \texttt{assume}\; e_\mathsf{bool}
    ~|~  \texttt{assert}\; e_\mathsf{bool}\\
    & |& \texttt{if\;($e_\mathsf{bool}$)\;\{\;$C$\;\}\;else\;\{\;$C$\;\}} ~|~
    \texttt{while\;($e_\mathsf{bool}$)\;\{\;$C$\;\}}\\
    & |& \texttt{fork} \; e_\mathsf{int}\; \vartheta() ~|~
    \texttt{join} \; e_\mathsf{int}\\
    & |&  C;C
  \end{array}
\]
The set of all commands is denoted by $\Command$.
The set of \emph{atomic statements}, $\atomicstmt$, is the set of all assignments and \texttt{assume} commands.
We call atomic statements as well as \texttt{fork} and \texttt{join} commands \emph{simple statements}.

The key feature of \proglang are the \texttt{fork} and \texttt{join} commands for dynamic thread management.
The statement \texttt{fork $e$ $\vartheta$()} creates a new thread,
whose code is given by the thread template $\vartheta$.
The expression $e$ is evaluated, and its current value is used as the \emph{thread ID} of the newly created thread.
All forked threads run concurrently, and their computation steps can be arbitrarily interleaved (i.e., we consider a sequential consistency semantics).
When a thread executes \texttt{join $e$}, the execution blocks until some other thread, whose thread ID equals the value of $e$, has terminated.
The terminated thread is destroyed.

This style of dynamic thread management is inspired by the POSIX threads (or \emph{pthreads}) API~\cite{pthreads},
in which a call to \texttt{pthread\_create} creates a new thread, with the thread template given by a function pointer.
In contrast to our \texttt{fork} command, \texttt{pthread\_create} also computes and returns a thread ID.
In \proglang programs, the computation of a (unique) thread ID can be implemented separately, and the resulting thread ID can be used by a \texttt{fork} command.
However, \proglang also allows multiple threads to share the same thread ID.
Generally, there is a tradeoff regarding uniqueness of thread IDs:
One can enforce uniqueness of thread IDs on the semantic level, and possibly design a type system that ensures this uniqueness.
This complicates the definition of the language, and slightly limits the expressiveness of the language.
On the other hand, allowing thread IDs to be non-unique leads to a simpler and more expressive language.
Though it slightly complicates some proofs in this paper, we choose the latter path.

The semantics of the other, not concurrency-related commands, is standard:
An \texttt{assume} commands blocks execution if the given expression evaluates to $\false$, otherwise it has no effect.
An \texttt{assert} command \emph{fails} if the given expression evaluates to $\false$.
The intuition behind the assignment, \texttt{if-then-else} and \texttt{while} commands,
as well as the sequential composition (\texttt{;}) of commands is as expected.

We use the special values $\Omega$ to represent a command that has successfully terminated,
and $\lightning$ for a command that has failed (due to a violated \texttt{assert}).
For convenience, we extend the sequential composition by setting $C; \Omega := C$.

A program is given by a tuple $\prog = (\body{}, \main, \globalVars)$,
consisting of a mapping $\body{}$ that associates each thread template name $\vartheta$ with a command $\body{\vartheta}$,
a thread template $\main \in \threadTemplates$ for the main thread,
and a set of global variables $\globalVars$.
For each thread template $\vartheta\in\threadTemplates$,
the command $\body{\vartheta}$ identifies the code executed by instances of this thread template.
This command may refer to the variables in $\globalVars$,
as well as to any other variables (which are implicitly assumed to be thread-local).
We do not require variables to be declared, and uninitialized variables can have arbitrary values.

\vspace*{-5mm}
\subsubsection{Semantics.}
We define the (small-step) semantics of \proglang, in the style of structural operational semantics.
To this end, we first introduce the notion of \emph{local} and \emph{global configurations} of a given \proglang program $\prog = (\body{},\main,\globalVars)$.

A \emph{local configuration} is a quadruple $\langle X, \vartheta, t, s \rangle$ consisting of
some $X$ that is either a remainder program left to execute ($X \in \Command$), or a special value to indicate termination ($\Omega$) or failure ($\lightning$),
a thread template name $\vartheta$,
a thread ID $t\in\mathbb{Z} \cup \{\bot\}$,
and a local state $s : \vars \setminus \globalVars \to \mathbb{Z} \cup \{\true,\false\}$.
The thread ID $\bot$ is used exclusively for the start thread.
In general, thread IDs need not be unique.
We keep track of the thread ID in the local configuration,
in order to determine which threads can be joined when another thread executes a \texttt{join} command.
If multiple threads with the same ID are ready to be joined, one thread is chosen nondeterministically.

A \emph{global configuration} is a pair $(M, g)$ of a multiset $M$ of local configurations, and a global state $g : \globalVars \to \mathbb{Z} \cup \{\true,\false\}$.
We use a multiset to reflect the fact that several running threads could have the same local configuration.
A global configuration $(M, g)$ is \emph{initial}
if $M = \lbag \langle \body{\main}, \main, \bot, s \rangle \rbag$
for any local state $s$;
the global state $g$ is also arbitrary.
(The symbols $\lbag \ldots \rbag$ denote a multiset containing the listed elements.)
%

\Cref{fig:sem-def} defines the small-step structural operational semantics of our language as a transition relation $(M, g) \semtrans{\st} (M', g')$ over global configurations $(M,g), (M',g')$ and simple statements $\st$.
We assume here that, given a mapping $\tilde{s} : \vars \to \mathbb{Z}\cup\{\true,\false\}$,
we can evaluate an expression $e$ to some value $\sem{e}^{\tilde{s}} \in \mathbb{Z}\cup\{\true,\false\}$.
In particular, if $s$ is a local state and $g$ is a global state, we can set $\tilde{s} := s \cup g$.
\begin{figure}
\begin{gather*}
  \infer[\srule{Assume}]{
    \lbag \langle \texttt{assume $e$};X,\vartheta,t,s\rangle \rbag,g \semtrans{\texttt{assume $e$}} \lbag \langle X,\vartheta,t,s\rangle \rbag,g
  }{
    \sem{e}^{s\cup g} = \true
  }
  \\[1em]
%
  \infer[\srule{AssignGlobal}]{
    \lbag \langle \texttt{$x$:=$e$};X,\vartheta,t,s\rangle \rbag,g \semtrans{\texttt{$x$:=$e$}} \lbag \langle X,\vartheta,t,s\rangle\rbag,g[x\mapsto \sem{e}^{s\cup g}]
  }{
    x \in \globalVars
  }
  \\[1em]
  \infer[\srule{AssignLocal}]{
    \lbag\langle \texttt{$x$:=$e$}; X,\vartheta,t,s\rangle\rbag,g \semtrans{\texttt{$x$:=$e$}} \lbag\langle X,\vartheta,t,s[x\mapsto \sem{e}^{s\cup g}]\rangle\rbag,g
  }{
    x \notin \globalVars
  }
  \\[1em]
%
  \infer[\srule{Assert1}]{
    \lbag \langle \texttt{assert $e$}; X,\vartheta,t,s\rangle \rbag,g \semtrans{\texttt{assume $e$}} \lbag \langle X,\vartheta,t,s\rangle \rbag, s
  }{
    \sem{e}^{s\cup g} = \true
  }
  \\[1em]
  \infer[\srule{Assert2}]{
    \lbag \langle \texttt{assert $e$}; X,\vartheta,t,s\rangle \rbag,g \semtrans{\texttt{assume !$e$}} \lbag \langle \lightning,\vartheta,t,s\rangle \rbag, g
  }{
    \sem{e}^{s\cup g} = \false
  }
  \\[1em]
%
%
%
  \infer[\srule{Ite1}]{
    \lbag\langle \texttt{if\,($e$)\,\{\,$C_1$\,\}\,else\,\{\,$C_2$\,\}}; X, \vartheta, t,s\rangle\rbag, g \semtrans{\texttt{assume $e$}} \lbag \langle C_1; X,\vartheta,t,s\rangle\rbag,g
  }{
    \sem{e}^{s\cup g} = \true
  }
  \\[1em]
  \infer[\srule{Ite2}]{
    \lbag \langle \texttt{if\,($e$)\,\{\,$C_1$\,\}\,else\,\{\,$C_2$\,\}}; X, \vartheta, t,s\rangle\rbag, g \semtrans{\texttt{assume !$e$}} \lbag\langle C_2; X, \vartheta, t,s\rangle\rbag,g
  }{
    \sem{e}^{s\cup g} = \false
  }
  \\[1em]
%
  \infer[\srule{While1}]{
    \lbag\langle \texttt{while\,($e$)\,\{\,$C$\,\}}; X, \vartheta, t,s\rangle\rbag, g \semtrans{\texttt{assume $e$}} \lbag\langle \texttt{$C\,;$\,while\,($e$)\,\{\,$C$\,\}}; X,\vartheta, t,s\rangle\rbag,g
  }{
    \sem{e}^{s\cup g} = \mathit{true}
  }
  \\[1em]
  \infer[\srule{While2}]{
    \lbag\langle \texttt{while\,($e$)\,\{\,$C$\,\}}; X, \vartheta, t,s\rangle\rbag, g \semtrans{\texttt{assume !$e$}} \lbag \langle X, \vartheta, t,s\rangle\rbag,g
  }{
    \sem{e}^{s\cup g} = \mathit{false}
  }
  \\[1em]
%
  \infer[\srule{Fork}]{
    \lbag \langle \texttt{fork\;$e$\;$\vartheta'$()}; X, \vartheta, t, s\rangle \rbag, g \semtrans{\texttt{fork\;$e$\;$\vartheta'$()}} \lbag \langle X, \vartheta, t,s\rangle,\langle \body{\vartheta'},\vartheta', \sem{e}^{s\cup g}, s' \rangle \rbag, g
  }{}
  \\[1em]
%
  \infer[\srule{Join}]{
    \lbag \langle \texttt{join\;$e$}; X, \vartheta, t, s\rangle, \langle\Omega, \vartheta', \sem{e}^{s\cup g}, s'\rangle \rbag, g \semtrans{\texttt{join\;$e$}} \lbag \langle X,\vartheta,t,s\rangle\rbag , g
  }{
  }
  \\[1em]
%
  \infer[\srule{Frame}]{
    M_1\uplus M_2,g \stackrel{\st}{\longrightarrow} M_1'\uplus M_2,g'
  }{
    M_1,g \stackrel{\st}{\longrightarrow} M_1',g'
  }
\end{gather*}
\caption{The definition of the small-step semantic transition relation. Assume that $C,C_1,C_2\in\Command$, $X \in \Command \cup \{\Omega\}$, $\vartheta,\vartheta'\in\threadTemplates, t\in\mathbb{Z}\cup\{\bot\}$, $s,s'$ are local states, $g,g'$ are global states, and $e$ is an expression.}
\label{fig:sem-def}
\end{figure}
Given the semantic transition relation, we define:
\begin{definition}[Execution]
  An \emph{execution} is a sequence of global configurations and statements
$
  (M_0, g_0) \semtrans{\st_1} \ldots \semtrans{\st_n} (M_n, g_n)
$
where $(M_0,g_0)$ is initial.
\end{definition}


\begin{definition}[Correctness]
  An execution $(M_0, g_0) \semtrans{\st_1} \ldots \semtrans{\st_n} (M_n, g_n)$ is \emph{erroneous} if $\lightning$ occurs in any local configuration of any $M_i$.
  A \proglang program $\prog$ is \emph{correct}
  if there does not exist any erroneous execution of $\prog$.
\end{definition}


\begin{definition}[Thread Width]
We say that the \emph{thread width of the execution} $(M_0, g_0) \semtrans{\st_1} \ldots \semtrans{\st_n} (M_n, g_n)$
is the maximum number $\beta\in\mathbb{N}$
such that some $M_i$ contains $\beta$ local configurations with the same thread template $\vartheta$.
The \emph{thread width of the program $\prog$} is the supremum over the thread widths of all executions of $\prog$.
%
%
\end{definition}
The thread width plays a crucial role in our verification approach.
Note that the thread width of an execution is always a natural number.
However, the thread width of a program might be infinite.

\begin{example}
  The thread width of the program in \cref{fig:example} is $\beta=2$.
  Although the program might unboundedly often execute a \texttt{fork} command,
  there are always at most two worker threads active at the same time.
\end{example}


%
%

\section{Petrification}
\label{sec:petrification}

In this section, we describe a process called \emph{petrification},
which transforms a $\proglang$ program $\prog$ into a representation suitable for verification algorithms.
Specifically, we build on the formalism of~\cite{vmcai2021},
and transform $\prog$ into a so-called \emph{Petri program}.
The resulting Petri program can be verified using the algorithm presented in~\cite{vmcai2021}.
%
The petrified program captures the control flow of $\prog$,
and maintains the concurrent nature of the program (as opposed to, say, an interleaving model).
%
Petrification is parametrized in an upper limit on the number of threads.
In \cref{sec:verification}, we show how an iterative verification algorithm can manipulate this parameter
in order to determine a (semantic) upper bound on the number of active threads (i.e. thread width), if it exists.

\subsection{Petri Programs}
\label{sec:petri-programs}

Before presenting our construction, we give a brief recap on Petri programs.
\begin{definition}[Petri Programs]
  A Petri program is given by a 5-tuple $
	\N = (P, T, F, m_\mathsf{init}, \lab)
  $,
  where
  $P$ is a finite set of \emph{places},
  $T$ is a finite set of \emph{transitions} with $P\cap T=\emptyset$,
  $F\subseteq (P\times T)\cup(T\times P)$ is a \emph{flow relation},
  $m_\mathsf{init}: P\to \mathbb{N}$ is an \emph{initial marking},
  and $\lab:T\to \atomicstmt$ is a \emph{labeling} of transitions.
\end{definition}
Equipped with Petri net semantics,
a Petri program defines a set of \emph{traces} (i.e., sequences of atomic statements) that describe possible program behaviours.
%
\goodbreak

Formally, we define a \emph{marking} as a map $m:P\to\mathbb{N}$ that assigns a token count to each place.
%
With $m\fire{\lambda(t)}m'$ we denote that transition $t\in T$ with the label $\lambda(t)$ can be \emph{fired} from marking $m$,
i.e., all predecessor places have a token -- formally, $m(p) > 0$ for all $p$ with $(p,t)\in F$ --,
and the firing of $t$ results in the marking $m'$
-- formally, $m'(p)=m(p)- \chi_F(\langle p,t \rangle) + \chi_F(\langle t,p\rangle)$, where $\chi_F$ is the characteristic function of $F$.
%
A \emph{firing sequence} is a sequence $m_0 \fire{\lambda(t_1)} m_1 \fire{\lambda(t_2)} \ldots \fire{\lambda(t_n)} m_n$,
where $m_0 = m_\init$ is the initial marking.
%
We say that a marking $m$ is reachable iff there exists a firing sequence $m_0 \fire{\lambda(t_1)} m_1 \fire{\lambda(t_2)} \ldots \fire{\lambda(t_n)} m_n$ with
$m_n = m$.
Analogously to~\cite{vmcai2021}, we only consider Petri programs that are \emph{1-safe}, i.e., where all reachable markings have at most one token per place.
%
%
%
We thus identify reachable markings $m : P \to \{0,1\}$
with sets of places%
.
\medskip

Unlike \proglang programs, Petri programs do not have a notion of global and local variables.
The semantics are formulated over (global) program states $\sigma\in\petristate$, i.e., mappings from variables to their (boolean or integer) values.
Each atomic statement $\st$ has a semantic transition relation $\sem{\st} \subseteq \petristate\times\petristate$:
\begin{align*}
  \sem{\texttt{x:=e}} &:= \{\, (\sigma,\sigma') \mid \sigma' = \sigma[x\mapsto \sem{e}^\sigma] \,\}\\
  \sem{\texttt{assume e}} &:= \{\, (\sigma,\sigma') \mid \sigma = \sigma' \land \sem{e}^\sigma = \true \,\}
\end{align*}
%
%
%
A \emph{specification} for a Petri program is a set of ``bad'' places $\spec$,
i.e., places that should not be reached by an execution of the Petri program.
%
%
%
\begin{definition}[Satisfaction]
  A \emph{counterexample} to a specification $\spec$
  consists of a firing sequence $m_0\fire{\lambda(t_1)}\ldots\fire{\lambda(t_n)}m_n$
  and a sequence of states $\sigma_0, \ldots, \sigma_n \in \petristate$,
  such that $m_n \cap \spec \neq \emptyset$ (i.e., a bad place is reached),
  and we have $(\sigma_{i-1},\sigma_i)\in\sem{\lambda(t_i)}$ for all $i\in\{1,\ldots,n\}$
  (i.e., the trace corresponding to the firing sequence can actually be executed).

  The Petri program $\N$ \emph{satisfies} the specification $\spec$, denoted $\N \models \spec$,
  if there does not exist a counterexample to $\spec$ in $\N$.
\end{definition}

\subsection{Transformation to Petri Programs}
In this section, we present the transformation that we call \emph{petrification}.
The input for this construction consists of a \proglang program $\prog = (\body{}, \texttt{main}, \globalVars)$
and a \emph{thread limit} $\beta\in\mathbb{N}$.
The constructed Petri program represents all executions of the program $\prog$ where, at any time, at most $\beta$ threads with the same template are active,
i.e., executions whose thread width is at most $\beta$.

First, recall that, unlike \proglang programs,
Petri programs
do not have a notion of thread-local variables; all variables are global.
We thus rename variables of the program $\prog$ to ensure uniqueness.
Let $\vars$ be all the variables mentioned in $\prog$.
We define the set \emph{instantiated variables} as follows:
\begin{multline*}
  \vars_\mathrm{inst} := \big\{\, \instvar{$x$}{\vartheta}{k} \mid x\in\vars\setminus\globalVars, k \in \{\bot,1,\ldots,\beta\},  \vartheta\in\threadTemplates \,\big\} \\
  \null \cup \globalVars \cup \big\{\, \idvar{\vartheta}{k} \mid k\in\{1,\ldots,\beta\}, \vartheta\in\threadTemplates \,\big\}
\end{multline*}
For each local variable $x$,
we define instantiated variables $\instvar{$x$}{\vartheta}{k}$,
where $\vartheta$ is a thread template, and $k\in\{\bot,1,\ldots,\beta\}$ is a unique \emph{instance ID}.
These IDs range from $1$ to $\beta$, with the special $\bot$ for the \texttt{main} thread that is initially active.
Global variables are not instantiated.
We also introduce variables \idvar{\vartheta}{k},
which keep track of the (non-unique) thread IDs used by \texttt{fork} and \texttt{join} statements.
We extend the idea of instantiation to expressions and atomic statements:
\instvar{$e$}{\vartheta}{k} denotes the expression derived by replacing every local variable $x\in\vars\setminus\globalVars$ in $e$ with \instvar{$x$}{\vartheta}{k}.
Similarly, the instantiated statement $[\st]_{\vartheta,k}$ is derived by replacing every local variable $x\in\vars\setminus\globalVars$ in the atomic statement $\st$ with \instvar{$x$}{\vartheta}{k}.

We introduce a formalism to capture the program control flow.
First, let us define the set of \emph{control locations} $\mathbb{L}^\prog_\beta$:
%
\begin{multline*}
\mathbb{L}^\prog_\beta = (\CommandPlus) \times \threadTemplates \times \{\bot,1,\ldots,\beta\} \\
  \cup \{\, \inuse{\vartheta}{k}, \notinuse{\vartheta}{k}, \insuff{\vartheta} \mid k \in \{1,\ldots,\beta\}, \vartheta \in \threadTemplates \,\}
\end{multline*}
There are several types of control locations:
\emph{Program control locations} consist of a command to be executed,
as well as a thread template and an instance ID that indicate to which thread the control location belongs.
%
Additional control locations $\inuse{\vartheta}{k}$ and $\notinuse{\vartheta}{k}$
indicate if the thread with template $\vartheta$ and instance ID $k$ is currently active or not.
Finally, the control location $\insuff{\vartheta}$ indicates that the thread limit $\beta$ is insufficient for the program $\prog$,
specifically because more than $\beta$ threads with template $\vartheta$ can be created.
%

We define the control flow as a ternary relation ${\petrirel{\cdot}}$ between sets of control locations, (instantiated) simple statements, and sets of control locations.
Specifically, let ${\petrirel{\cdot}}$ be the smallest relation such that the following conditions hold:
\begin{enumerate}
\item Let us fix a command $C$, some $X\in\CommandPlus$, a thread template $\vartheta$ and a simple statement $\st$.
  For every rule of the semantics definition (see \cref{fig:sem-def})
  that has the form
  \[
    \infer{\lbag\langle C, \vartheta, t, s \rangle\rbag, g \semtrans{\st} \lbag\langle X, \vartheta, t,  s'\rangle\rbag, g'}{\varphi} \ ,
  \]
  where $\varphi$ is only a side condition (i.e., $\varphi$ does not refer to the semantic transition relation),
  it holds that
  \begin{equation}
    \langle C, \vartheta, k \rangle \petrirel{[\st]_{\vartheta,k}} \langle X, \vartheta, k \rangle
    \label{eq:petrify-local}
  \end{equation}
  for all instance IDs $k \in \{\bot,1,\ldots,n\}$.
  In particular, this applies for the semantic rules
  \srule{Assume}, \srule{AssignGlobal}, \srule{AssignLocal}, \srule{Assert1}, \srule{Assert2}, \srule{Ite1}, \srule{Ite2}, \srule{While1} and \srule{While2}.
\item The following holds for all $X \in \Command \cup \{\Omega\}$, $k\in\{\bot,1,\ldots,\beta\}$ and $k'\in\{1,\ldots,\beta\}$:
\begingroup
\allowdisplaybreaks
  \begin{gather}
    \begin{multlined}
    \langle \texttt{fork\;$e$\;$\vartheta'$()}; X,\vartheta, k \rangle, \inuse{\vartheta'}{1},\ldots,\inuse{\vartheta'}{k'-1},\notinuse{\vartheta'}{k'}
    \qquad\qquad\\\qquad\qquad
    \petrirel{\texttt{\idvar{\vartheta'}{k'}:=\instvar{$e$}{\vartheta}{k}}} \langle X,\vartheta, k \rangle, \langle \body{\vartheta'},\vartheta', k' \rangle, \inuse{\vartheta'}{1},\ldots,\inuse{\vartheta'}{k'}
    \end{multlined}
    \label{eq:petrify-fork}
    \\[0.5em]
    \langle \texttt{fork\;$e$\;$\vartheta'$()}; X,\vartheta, k \rangle, \inuse{\vartheta'}{1},\ldots,\inuse{\vartheta'}{\beta} \petrirel{\texttt{assume true}} \insuff{\vartheta'}
    \label{eq:petrify-insuff}
    \\
    \langle \texttt{join\;$e$}; X,\vartheta, k \rangle, \langle \Omega, \vartheta',k' \rangle, \inuse{\vartheta'}{k'} \petrirel{\texttt{assume \idvar{\vartheta'}{k'}==\instvar{$e$}{\vartheta}{k}}} \langle X,\vartheta, k \rangle, \notinuse{\vartheta'}{k'}
    \label{eq:petrify-join}
  \end{gather}
\endgroup
%
%
%
\end{enumerate}
%
%
\Cref{eq:petrify-local} captures the sequential control flow within a thread,
in analogy to the \proglang semantics.
\Cref{eq:petrify-fork,eq:petrify-insuff,eq:petrify-join} implement the dynamic thread management and the thread limit $\beta$.
When a \texttt{fork} statement is executed,
the newly created thread is assigned a currently inactive instance ID $k'$.
Specifically, we always assign the minimal available instance ID.
If all instance IDs are already active, control gets stuck in the control location $\insuff{\vartheta'}$.
When a \texttt{join} statement is executed, the joined thread must have terminated (the remainder program is $\Omega$),
and the corresponding instance ID $k'$ is marked as inactive.
The full definition of $\petrirel{\cdot}$ \ifarxiv is shown in \cref{fig:full-petrification-trans} in the appendix.\else can be found in the extended version of the paper.\todo{cite} \fi
We can now define petrification:

\begin{definition}[Petrified Program]
    We define the \emph{petrified program}
    as the Petri program
    $\petriProg{\beta}{\prog} := (P, T, F, m_\init, \lambda)$,
    where
    \begin{itemize}
      \item the places are control locations: $P=\mathbb{L}_\beta^\prog$,
      \item the transitions are triples defined by the control flow relation $\petrirel{\cdot}$:
      \[
        T = \{\, (\mathit{pred}, \st, \mathit{succ}) \mid \mathit{pred} \petrirel{\st} \mathit{succ} \,\},
      \]
      \item the flow relation follows directly from the transition triples:
      \[
        F = \{\, (p, (\mathit{pred},\st, \mathit{succ})) \mid p \in \mathit{pred} \,\} \cup \{\, ((\mathit{pred},\st, \mathit{succ}),p') \mid p' \in \mathit{succ} \,\},
      \]
      \item the initial marking consists of those control locations that indicate that the \main thread is in its initial location (i.e., it has to execute the complete body) and all other threads are not active
        \[ m_\init = \{\langle \body{\main}, \main, \bot \rangle\} \cup \{\notinuse{\vartheta}{k} \mid k \in \{1,\ldots,\beta\}, \vartheta \in \threadTemplates\} \]
      \item and the labeling of a transition triple is given by its second component:
      \[
        \lambda\big( (\mathit{pred},\st, \mathit{succ}) \big) = \st.
      \]
    \end{itemize}
\end{definition}
Note that $\mathbb{L}_\beta^\prog$ has infinitely many elements, because there are infinitely many commands.
However, only a finite number of control locations are actually relevant to the program.
We omit certain unreachable places and transitions in $\petriProg{\beta}{\prog}$, such that it only has finitely many places.
%

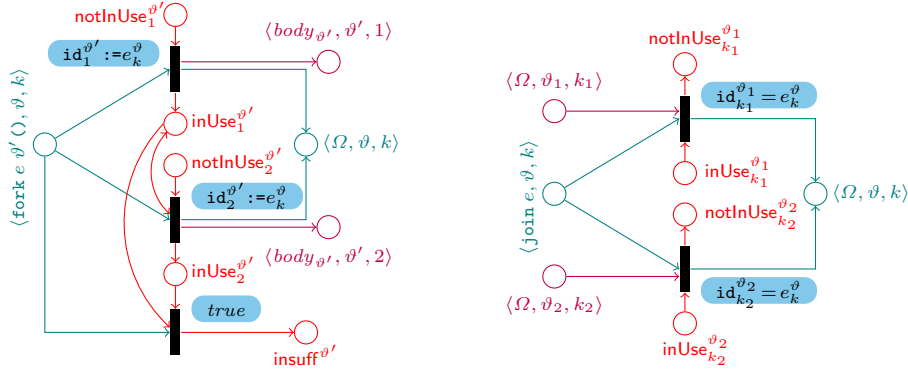
\begin{figure}[t]
\tikzstyle{pl}=[draw,circle]
\begin{tikzpicture}
  \node[pl,draw=teal,label={[above,rotate=90,xshift=-2mm,yshift=1mm,teal,font=\scriptsize]$\langle \texttt{fork\;$e$\;$\vartheta'$()}, \vartheta, k\rangle$}] (l11) {};
  \node[transition,right=15mm of l11,yshift= 10mm,label={[left,yshift=-1mm,xshift=-1mm]\stsmcol{\scriptsize\idvar{\vartheta'}{1}:=\instvar{$e$}{\vartheta}{k}}{th1}}] (fork1) {};
  \node[transition,right=15mm of l11,yshift=-10mm,label={[right,xshift=1mm]\stsmcol{\scriptsize\idvar{\vartheta'}{2}:=\instvar{$e$}{\vartheta}{k}}{th1}}] (fork2) {};
  \node[transition,right=15mm of l11,yshift=-25mm,label={[right,xshift=1mm]\stsmcol{\scriptsize$\true$}{th1}}] (fork3) {};
  \node[pl,draw=teal,  right=15mm of fork1,yshift=-10mm,label={[right,yshift=-1.5mm,xshift=1mm,teal,font=\scriptsize]$\langle \Omega, \vartheta, k\rangle$}] (l12) {};
  \node[pl,draw=purple,right=18mm of fork1,yshift=1mm,label={[above,purple,font=\scriptsize]$\langle \body{\vartheta'}, \vartheta', 1\rangle$}] (l21a) {};
  \node[pl,draw=purple,right=18mm of fork2,yshift=-1mm,label={[below,yshift=-3mm,purple,font=\scriptsize]$\langle \body{\vartheta'}, \vartheta', 2\rangle$}] (l21b) {};

  \node[pl,above= 2.5mm of fork1,red,label={[left,yshift=-1.5mm,red,font=\scriptsize]$\notinuse{\vartheta'}{1}$}] (niu1) {};
  \node[pl,below= 2.5mm of fork1,red,label={[right,xshift=1mm,yshift=-1mm,red,font=\scriptsize]$\inuse{\vartheta'}{1}$}]    (iu1)  {};
  \node[pl,above= 2.5mm of fork2,red,label={[right,xshift=1mm,yshift=-1mm,red,font=\scriptsize]$\notinuse{\vartheta'}{2}$}] (niu2) {};
  \node[pl,below= 2.5mm of fork2,red,label={[right,xshift=1mm,yshift=-1mm,red,font=\scriptsize]$\inuse{\vartheta'}{2}$}]    (iu2)  {};
  \node[pl,right=15mm of fork3,red,label={[below,yshift=-2mm,red,font=\scriptsize]$\insuff{\vartheta'}$}]    (ins)  {};

  \draw[->,teal]   (l11)   -- (fork1.west);
  \draw[->,red]    (niu1)  -- (fork1);
  \draw[->,teal]   (fork1) -| (l12);
  \draw[->,purple] (fork1.east) ++(up:1mm) -- (l21a);
  \draw[->,red]    (fork1) -- (iu1);

  \draw[->,teal]   (l11)   -- (fork2.west);
  \draw[<->,red]    (iu1) edge[bend right=45] (fork2);
  \draw[->,red]    (niu2)  -- (fork2);
  \draw[->,teal]   (fork2) -| (l12);
  \draw[->,purple] (fork2.east) ++(down:1mm) -- (l21b);
  \draw[->,red]    (fork2) -- (iu2);

  \draw[->,teal]   (l11)   |- (fork3.west);
  \draw[red]  (iu1.west) edge[->,bend right=45] (fork3);
  \draw[->,red]    (iu2) -- (fork3);
  \draw[->,red]    (fork3) -- (ins);
\end{tikzpicture}
\hfill
\begin{tikzpicture}
  \node[pl,draw=teal,label={[above,rotate=90,xshift=-2mm,yshift=1mm,teal,font=\scriptsize]$\langle \texttt{join\;$e$}, \vartheta, k\rangle$}] (l11) {};
  \node[transition,right=15mm of l11,yshift= 10mm,label={[right,xshift=1mm]\stsmcol{\scriptsize$\idvar{\vartheta_1}{k_1}\!=\!\instvar{$e$}{\vartheta}{k}$}{th1}}] (join1) {};
  \node[transition,right=15mm of l11,yshift=-10mm,label={[right,xshift=1mm,yshift=-6mm]\stsmcol{\scriptsize$\idvar{\vartheta_2}{k_2}\!=\!\instvar{$e$}{\vartheta}{k}$}{th1}}] (join2) {};
  \node[pl,draw=teal,  right=15mm of join1,yshift=-10mm,label={[right,yshift=-1.5mm,xshift=1mm,teal,font=\scriptsize]$\langle \Omega, \vartheta, k\rangle$}] (l12) {};
  \node[pl,draw=purple,left=15mm of join1,yshift=1mm,label={[above,purple,font=\scriptsize]$\langle \Omega, \vartheta_1, k_1\rangle$}] (l21a) {};
  \node[pl,draw=purple,left=15mm of join2,yshift=-1mm,label={[below,yshift=-3mm,purple,font=\scriptsize]$\langle \Omega, \vartheta_2, k_2\rangle$}] (l21b) {};

  \node[pl,above= 2.5mm of join1,red,label={[above,yshift=-1.5mm,xshift=1.5mm,red,font=\scriptsize]$\notinuse{\vartheta_1}{k_1}$}] (niu1) {};
  \node[pl,below= 2.5mm of join1,red,label={[right,xshift=1.5mm,yshift=-1.5mm,red,font=\scriptsize]$\inuse{\vartheta_1}{k_1}$}]    (iu1)  {};
  \node[pl,above= 2.5mm of join2,red,label={[right,xshift=1.5mm,yshift=-1.5mm,red,font=\scriptsize]$\notinuse{\vartheta_2}{k_2}$}] (niu2) {};
  \node[pl,below= 2.5mm of join2,red,label={[below,yshift=-2mm,xshift=1.5mm,red,font=\scriptsize]$\inuse{\vartheta_2}{k_2}$}]    (iu2)  {};

  \draw[->,teal]   (l11)   -- (join1.west);
  \draw[<-,red]    (niu1)  -- (join1);
  \draw[->,teal]   (join1) -| (l12);
  \draw[<-,purple] (join1.west) ++(up:1mm) -- (l21a);
  \draw[<-,red]    (join1) -- (iu1);

  \draw[->,teal]   (l11)   -- (join2.west);
  \draw[<-,red]    (niu2)  -- (join2);
  \draw[->,teal]   (join2) -| (l12);
  \draw[<-,purple] (join2.west) ++(down:1mm) -- (l21b);
  \draw[<-,red]    (join2) -- (iu2);
\end{tikzpicture}
\caption{Illustration of the transitions derived from \cref{eq:petrify-fork,eq:petrify-insuff} (on the left) and from \cref{eq:petrify-join} (on the right), for $\beta=2$.}
\label{fig:petrify-fork-join}
\end{figure}
\todo{Fix the overlapping labels in \cref{fig:petrify-fork-join}}

\cref{fig:petrify-fork-join} illustrates the transitions created for fork and join commands in the petrified program (for $\beta=2$).
The three transitions on the left correspond to \texttt{fork\;$e$\;$\vartheta$()}.
The uppermost transition creates the thread with instance ID 1.
It can only be fired if the thread with instance ID 1 is not yet active (there is a token in $\notinuse{\vartheta}{1}$).
After firing the transition, the thread with instance ID 1 is active (there is a token in $\inuse{\vartheta}{1}$), its thread ID was stored in \idvar{\vartheta}{1}, and the command $\body{\vartheta}$ remains to be executed.
%
%
The transition in the middle behaves similarly, for the thread with instance ID 2.
However, it can only be executed if the thread with instance ID 1 is already active.
The transition at the bottom indicates that our chosen thread limit $\beta$ was not sufficient. This transition can be fired if both thread instances are active (there are tokens in $\inuse{\vartheta}{1}$ and $\inuse{\vartheta}{2}$).

To the right of \cref{fig:petrify-fork-join},
we see the transitions generated for a \texttt{join} command.
For each thread instance (regardless of the template),
there is a transition that corresponds to \texttt{join\;$e$}.
This transition can be fired if the thread instance is active (i.e., there is a token in the corresponding $\inuse{}{}$ place).
After firing the transition, the thread instance is no longer active (i.e., there is a token in the corresponding $\notinuse{}{}$ place).
The transition label ensures that the thread ID has the value of \texttt{e} (we omit the keyword \texttt{assume} for brevity).

\begin{example}
    The Petri net in \cref{fig:petri-net} represents $\petriProg{2}{\prog}$, where $\prog$ is the program shown in \cref{fig:example}.
    For readability, we forgo annotating the places with the corresponding program control locations.
    For example, the place named $\ell_{0,1}$ represents the control location $\langle \texttt{c:=c+i\,;\,assert c<=2*i\,;\,c:=c-i}, \texttt{w}, 1 \rangle$.
\end{example}

\subsection{Properties of the Petrified Program}
We show that the petrified program $\petriProg{\beta}{\prog}$ for a given program $\prog$ satisfies certain properties,
which allow it to be used in the verification of $\prog$.
Detailed proofs of these results can be found in \ifarxiv the appendix. \else the extended version of the paper.\todo{cite} \fi

Recall that the  Petri program verification algorithm~\cite{vmcai2021} only supports 1-safe Petri programs.
We thus show that petrification ensures 1-safety.
%
To this end, observe that all reachable markings of a petrified program $\petriProg{\beta}{\prog}$
satisfy the following conditions,
for all $\vartheta\in\threadTemplates$ and $k\in\{1,\ldots,\beta\}, k'\in\{\bot,1,\ldots,\beta\}$:
\begin{itemize}
  \item The sum of the tokens in $\inuse{\vartheta}{k}$, $\notinuse{\vartheta}{k}$ and $\insuff{\vartheta}$ is exactly 1.
  \item The place $\inuse{\vartheta}{k}$ has a token iff there exists some $X \in \CommandPlus$ such that $\langle X,\vartheta, k\rangle$ has a token.
  \item 
    The sum of the tokens in all places of the form $\langle X,\vartheta,k'\rangle$ (with some $X \in \CommandPlus$) is at most 1.
\end{itemize}
We call a marking that satisfies these conditions \emph{coherent}.
\begin{toappendix}
\begin{definition}[Coherence]
  A marking $m$ of the petrified program $\petriProg{\beta}{\prog}$ is \emph{coherent} if the following conditions hold for all $\vartheta\in\threadTemplates$ and $k\in\{1,\ldots,\beta\}, k'\in\{\bot,1,\ldots,\beta\}$:
  \begin{gather}
    m(\inuse{\vartheta}{k}) + m(\notinuse{\vartheta}{k}) + m(\insuff{\vartheta}) = 1\\
    m(\inuse{\vartheta}{k}) > 0 \iff \exists X \in \CommandPlus\,.\, m(\langle X,\vartheta, k\rangle) > 0\\
    \sum_{X\in\CommandPlus} m(\langle X,\vartheta,k'\rangle) \leq 1
  \end{gather}
\end{definition}
\end{toappendix}
\begin{lemma}
  All reachable markings of $\petriProg{\beta}{\prog}$ are coherent.
\end{lemma}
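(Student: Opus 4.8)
The plan is to establish coherence as an inductive invariant along firing sequences $m_\init = m_0 \fire{\lambda(t_1)} \cdots \fire{\lambda(t_n)} m_n$ of $\petriProg{\beta}{\prog}$. Since coherence is the conjunction of the three stated conditions, it suffices to verify each one on the initial marking and to show that a single firing from a coherent marking preserves all three.

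For the base case I would evaluate the three conditions on $m_\init$ directly. In $m_\init$ every place $\notinuse{\vartheta}{k}$ holds one token while all $\inuse{\vartheta}{k}$ and $\insuff{\vartheta}$ are empty, so the first condition reads $0 + 1 + 0 = 1$. The only marked program location is $\langle \body{\main}, \main, \bot\rangle$, whose instance ID $\bot$ lies outside $\{1,\ldots,\beta\}$; hence no place $\langle X, \vartheta, k\rangle$ with $k \geq 1$ is marked, so the biconditional holds vacuously and each per-instance sum in the third condition is $0$.

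For the inductive step I would fix a coherent marking $m$ with $m \fire{\lambda(t)} m'$ and split on the clause of $\petrirel{\cdot}$ that generated $t$. The clauses \cref{eq:petrify-local} are routine: such a transition shifts the unique token of a thread $(\vartheta, k)$ from $\langle C, \vartheta, k\rangle$ to $\langle X, \vartheta, k\rangle$ and leaves every $\inuse$, $\notinuse$ and $\insuff$ place fixed, so all three conditions are inherited from $m$. For the fork clause \cref{eq:petrify-fork} I would note that, besides advancing the forking thread, the transition consumes $\notinuse{\vartheta'}{k'}$, produces $\inuse{\vartheta'}{k'}$, seeds $\langle \body{\vartheta'}, \vartheta', k'\rangle$, and reads (leaves unchanged) the places $\inuse{\vartheta'}{1},\ldots,\inuse{\vartheta'}{k'-1}$. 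The first condition survives for instance $k'$ because a token is merely moved between two of its three places. For the biconditional and the per-instance bound, the key observation is that fireability forces $\notinuse{\vartheta'}{k'}$ to be marked in $m$, so by the first condition $\inuse{\vartheta'}{k'} = 0$ in $m$, whence by the biconditional for $m$ instance $k'$ carries no program token beforehand; thus after firing $\inuse{\vartheta'}{k'}$ and the single fresh token $\langle \body{\vartheta'}, \vartheta', k'\rangle$ appear together, re-establishing the biconditional and keeping the third condition at most $1$. The join clause \cref{eq:petrify-join} is symmetric, moving instance $k'$ from $\inuse{\vartheta'}{k'}$ back to $\notinuse{\vartheta'}{k'}$ while removing the terminal token $\langle \Omega, \vartheta', k'\rangle$.

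I expect the insufficiency clause \cref{eq:petrify-insuff} to be the main obstacle. It is the only transition that consumes more than two tokens, draining the forking thread's token together with all $\beta$ tokens $\inuse{\vartheta'}{1},\ldots,\inuse{\vartheta'}{\beta}$ into the single place $\insuff{\vartheta'}$. The subtlety is that the per-instance budget of the first condition, previously held one token at a time in each $\inuse{\vartheta'}{k}$, must now be carried for all $\beta$ instances simultaneously by the one shared place $\insuff{\vartheta'}$. I would first use fireability to conclude that $\inuse{\vartheta'}{k}$ is marked, hence $\notinuse{\vartheta'}{k} = 0$ and $\insuff{\vartheta'} = 0$, in $m$ for every $k \in \{1,\ldots,\beta\}$; after firing we then have $\inuse{\vartheta'}{k} = \notinuse{\vartheta'}{k} = 0$ and $\insuff{\vartheta'} = 1$, which restores the first condition uniformly in $k$, while the third condition is undisturbed because no program location is consumed or produced. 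Reconciling the biconditional once the $\insuff{\vartheta'}$ token is present --- now that the $\inuse{\vartheta'}{k}$ places have been emptied although the abandoned instances of $\vartheta'$ may still carry program tokens --- is exactly the delicate point of the argument, and is where the precise formulation of the coherence conditions in the presence of an $\insuff{\vartheta'}$ token must be used.
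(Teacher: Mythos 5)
Your inductive setup (base case on $m_\init$, case split over \cref{eq:petrify-local,eq:petrify-fork,eq:petrify-insuff,eq:petrify-join}) is exactly the paper's proof, which states this induction in one sentence without working out the cases; your treatment of the local, fork and join clauses is correct and supplies the details the paper omits. The point where you stop, however, is a genuine gap --- and it is one you cannot close, because the claim fails there: coherence as literally defined is \emph{not} preserved by \cref{eq:petrify-insuff}. After that transition fires, the active instances of $\vartheta'$ keep their program tokens $\langle X,\vartheta',k\rangle$ while every $\inuse{\vartheta'}{k}$ has been emptied, so the direction ``program token exists $\Rightarrow$ $\inuse{\vartheta'}{k}$ marked'' of the second condition is violated in a reachable marking (with $\beta=1$, a main thread that forks $\vartheta'$ twice reaches the marking $\{\langle\body{\vartheta'},\vartheta',1\rangle,\insuff{\vartheta'}\}$). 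There is also a second violation, which you missed and which contradicts your claim that this transition consumes ``no program location'': it consumes the forking thread's token $\langle\texttt{fork}\;e\;\vartheta'\texttt{()};X,\vartheta,k\rangle$ without producing a successor, so if the forker is a non-main instance of a template $\vartheta\neq\vartheta'$, its place $\inuse{\vartheta}{k}$ stays marked although instance $(\vartheta,k)$ no longer carries any program token --- breaking the other direction of the biconditional as well. (The mistaken claim is harmless for the third condition, since deleting tokens cannot violate an upper bound, but it is precisely what hides this second defect.) So the lemma is false as stated, and the paper's one-line proof glosses over exactly the case you flagged.

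The repair is to weaken the second condition to what the induction and 1-safety (\cref{prop:petrify-safe}) actually need: for all $\vartheta,k$, if $\notinuse{\vartheta}{k}$ is marked then instance $(\vartheta,k)$ carries no program token. This version \emph{is} preserved by all four clauses --- in particular by \cref{eq:petrify-insuff}, which creates neither $\notinuse{}{}$ tokens nor program tokens --- and together with the first and third conditions it forms an inductive invariant: it is exactly what your fork case needs to conclude that the freshly seeded instance $k'$ had no prior token, and your join case can instead appeal to the third condition (the consumed $\Omega$ token was, by the bound of $1$, the joined instance's only token). The full biconditional can then be recovered as a separate statement restricted to firing sequences in which no $\insuff{\vartheta}$ place is ever marked, which is the only form in which the paper uses it later (in \cref{thm:thread-bound-iff}). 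Your closing remark that ``the precise formulation of the coherence conditions in the presence of an $\insuff{\vartheta'}$ token must be used'' points in the right direction, but a complete proof has to commit to such a reformulation rather than defer to it.
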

\begin{proof}
  It is easy to see that the initial marking is coherent.
  Furthermore, observe that the successor of a coherent marking is again coherent:
  Each transition according to \cref{eq:petrify-local,eq:petrify-fork,eq:petrify-insuff,eq:petrify-join} preserves coherence.
\end{proof}
%
%
\begin{proposition}[1-Safety]
  \label{prop:petrify-safe}
  The Petri program $\petriProg{\beta}{\prog}$ is 1-safe.
\end{proposition}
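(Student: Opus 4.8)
The plan is to obtain 1-safety as an immediate corollary of the preceding coherence lemma, which already establishes that every reachable marking of $\petriProg{\beta}{\prog}$ is coherent. Recall that $\petriProg{\beta}{\prog}$ is 1-safe exactly when every reachable marking assigns at most one token to each place. Hence it suffices to show that coherence alone forces this bound for each type of place occurring in $\mathbb{L}_\beta^\prog$, after which the coherence lemma delivers the conclusion for all reachable markings.

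First I would partition the places into the \emph{bookkeeping places} $\inuse{\vartheta}{k}$, $\notinuse{\vartheta}{k}$, $\insuff{\vartheta}$ and the \emph{program control locations} $\langle X, \vartheta, k \rangle$ with $X \in \CommandPlus$; by the definition of $\mathbb{L}_\beta^\prog$ these two groups exhaust all places, so a case distinction over them is complete. For a bookkeeping place, the first coherence condition states that $m(\inuse{\vartheta}{k}) + m(\notinuse{\vartheta}{k}) + m(\insuff{\vartheta}) = 1$; since the three summands are non-negative integers, each of them is at most $1$. For a program control location, the third coherence condition gives $\sum_{X \in \CommandPlus} m(\langle X, \vartheta, k \rangle) \leq 1$, and again non-negativity of the summands forces each individual $m(\langle X, \vartheta, k \rangle)$ to be at most $1$.

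Combining the two cases, every coherent marking---and hence, by the coherence lemma, every reachable marking---is $\{0,1\}$-valued, which is precisely the definition of 1-safety. I do not expect a genuine obstacle here: the combinatorial content resides entirely in the coherence lemma, whose invariants were shown to be preserved by each transition of \cref{eq:petrify-local,eq:petrify-fork,eq:petrify-insuff,eq:petrify-join}. The only point requiring a moment's care is to confirm that the two groups of places are exhaustive; notably, the second coherence condition (relating $\inuse{}{}$ to the existence of an active control location) is not even needed for this particular argument.
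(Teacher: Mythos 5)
Your proof is correct and takes essentially the same approach as the paper's: both obtain 1-safety as an immediate corollary of the coherence lemma, observing that any coherent marking assigns at most one token to each place. The paper states this observation in one line, while you make explicit the (correct and exhaustive) case distinction between the bookkeeping places, handled by the first coherence condition, and the program control locations, handled by the third.
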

\begin{proof}
  Any coherent marking assigns at most one token to a place.
  Since all reachable markings are coherent, it follows that the Petri program is 1-safe.
\end{proof}

Thus we can verify the petrified program using an existing algorithm~\cite{vmcai2021}.
It remains to give a specification against which the petrified program shall be verified.
In fact, we define \emph{two} specifications for the petrified program $\petriProg{\beta}{\prog}$:
one that indicates the absence of assert violations,
and another that indicates sufficiency of the thread limit $\beta$.
\begin{definition}[Specifications]
  The \emph{safety specification} $\spec_\safe$ and the \emph{bound specification} $\spec_\bound$ of the petrified program $\petriProg{\beta}{\prog}$ are given by the sets of places
  \begin{align*}
    \spec_\safe &= \{\, \langle\lightning,\vartheta, k\rangle \mid k\in\{\bot,1,\ldots,\beta\} \land \vartheta \in \threadTemplates \,\}\\
    \spec_\bound &= \{\, \insuff{\vartheta} \mid \vartheta\in\threadTemplates\,\}
  \end{align*}
\end{definition}

%
%

It remains to be shown 
that the specifications of the petrified program actually correspond to the behaviour of the program $\prog$.
In order to show this, we first create a link between executions of the program $\prog$
and firing sequences of the petrified program $\petriProg{\beta}{\prog}$.
In particular, we map a given marking $m$ and a state $\sigma$ 
to a corresponding global configuration:
We create local configurations for all program control locations in $m$, and extract the thread ID as well as local and global states from $\sigma$.
Formally, $\conf{m}{\sigma} := (M, \sigma|_\globalVars)$
where
\[
    M = \lbag\, \langle C, \vartheta, t, \lfloor\sigma\rfloor_{\vartheta,k}\rangle \mid \langle C, \vartheta, k \rangle \in m \land (t = k = \bot \lor t = \sigma(\idvar{\vartheta}{k})) \,\rbag
\]
with $\lfloor\sigma\rfloor_{\vartheta,k}(x) = \sigma(\instvar{$x$}{\vartheta}{k})$ for all local variables $x\in\vars\setminus\globalVars$.
\begin{toappendix}
  Throughout the proofs of the lemmas below, we need an extended version of the instantiation of atomic statements.
  Specifically, we extend instantiation from atomic statements to simple statements,
  i.e., to fork and join statements.
  For fork and join statements, we need two pairs of thread template and instance ID.
  We define, in analogy to \cref{eq:petrify-fork,eq:petrify-join}:
  \begin{align*}
    [\texttt{fork\;$e$\;$\vartheta'$()}]_{\vartheta,k}^{\hat\vartheta,\hat{k}} &= \texttt{\idvar{\hat\vartheta}{\hat{k}}:=\instvar{$e$}{\vartheta}{k}}\\
    [\texttt{join\;$e$}]_{\vartheta,k}^{\hat\vartheta,\hat{k}} &= \texttt{assume\;\idvar{\hat\vartheta}{\hat{k}}==\instvar{$e$}{\vartheta}{k}}\\
    [\st]_{\vartheta,k}^{\hat\vartheta,\hat{k}}  &= [\st]_{\vartheta,k} \qquad\qquad\textbf{for } \st\in\atomicstmt
  \end{align*}
  Furthermore, given a statement $\tilde{\st}$ from the petrified program,
  we define the \emph{de-instantiated statement} $\lfloor\tilde{\st}\rfloor$
  as the (unique) statement $\st$ such that $ \tilde{\st} = [\st]_{\vartheta,k}^{\vartheta',k'}$ for some $\vartheta,\vartheta',k,k'$.
\end{toappendix}
The following lemmata allow us to associate an execution with firing sequences that are executable according to Petri program semantics:
\begin{lemmarep}
\label{lem:fire-exec}
  Let $m_0 \fire{\tilde{\st_1}} \ldots \fire{\tilde{\st_n}} m_n$ be a firing sequence of $\petriProg{\beta}{\prog}$
  where none of the markings $m_i$ contains a place $\insuff{\vartheta}$ for any $\vartheta$.
  Let $\sigma_0,\ldots,\sigma_n$ be states with $(\sigma_{i-1},\sigma_i) \in \sem{\tilde{\st_i}}$ for all $i\in\{1,\ldots,n\}$.
  There exist simple statements $\st_1,\ldots,\st_n$ such that
 $
    \conf{m_0}{\sigma_0} \semtrans{\st_1} \ldots \semtrans{\st_n} \conf{m_n}{\sigma_n}
  $
  is an execution.
\end{lemmarep}
\begin{proofsketch}
  We proceed by induction over $n$.
  For the base case, observe that the global configuration $\conf{m_\init}{\sigma_0}$ is initial.
  For the induction step, it remains to justify
  the last step in the execution, using the semantic relation.
  We proceed by case distinction over the rule (among \cref{eq:petrify-local,eq:petrify-fork,eq:petrify-join}) that is responsible for the last transition in the firing sequence
  (transitions according to \cref{eq:petrify-insuff} are ruled out by our assumption on the firing sequence).
  Using the semantic rules corresponding to each case, in combination with the \srule{Frame} rule,
  we show that the above semantic transition indeed exists.
\end{proofsketch}
\begin{proof}
  We choose the simple statements $\st_i := \lfloor \tilde{\st}_i \rfloor$, for all $i$.
  We proceed by induction over $n$.

  For the base case $n=0$,
  we only have to show that $\conf{m_0}{\sigma_0}$ is an initial global configuration.
  Since the firing sequence must begin with the initial marking,
  we have $m_0 = \mathbb{L}_\beta^\init$.
  It follows that
  \[
    \conf{m_0}{\sigma_0} = (\lbag \body{\main}, \main, \bot, \lfloor\sigma_0\rfloor_{\vartheta,\bot} \rbag, \sigma_0|_\globalVars),
  \]
  and hence $\conf{m_0}{\sigma_0}$ is indeed initial.

  For the inductive step, we assume that
  \[
    \conf{m_0}{\sigma_0} \semtrans{\lfloor \tilde{\st_1} \rfloor} \ldots \semtrans{\lfloor \tilde{\st_{n-1}} \rfloor} \conf{m_{n-1}}{\sigma_{n-1}}
  \]
  is an execution.
  It remains only to show that the semantic transition relation satisfies $\conf{m_{n-1}}{\sigma_{n-1}} \semtrans{\lfloor \tilde{\st_{n}} \rfloor} \conf{m_n}{\sigma_n}$.
  By the definition of the petrified program,
  we must have $m_{n-1} = m'_{n-1} \uplus \tilde{m}$ and $m_n = m'_n \uplus \tilde{m}$ such that $m'_{n-1} \petrirel{\tilde{\st}_{n}} m'_n$,
  for some suitable $m'_{n-1}$, $m'_n$ and $\tilde{m}$.
  We proceed by case distinction over the definition of the control flow relation $\petrirel{\cdot}$.
  In each case, we show that $\conf{m'_{n-1}}{\sigma_{n-1}} \semtrans{\lfloor \tilde{\st_{n}} \rfloor} \conf{m'_n}{\sigma_n}$.
  The result follows by application of the  semantic rule \srule{Frame}, where $M_2$ is the first component of $\conf{\tilde{m}}{\sigma_n}$.
  \begin{itemize}
    \item If the control flow transition $m'_{n-1} \petrirel{\tilde{\st}_{n}} m'_n$ is derived from \cref{eq:petrify-local},
      then we have $m'_{n-1} = \{ \langle C,\vartheta, k \rangle \}$ and $m'_n = \{\langle X,\vartheta,k\rangle\}$ for some $C,X,\vartheta$ and $k$.
      Further, we know that
      \[
        \infer{\lbag\langle C, \vartheta, t, s \rangle\rbag, g \semtrans{\st} \lbag\langle X, \vartheta, t,  s'\rangle\rbag, g'}{\varphi} \ ,
      \]
      is a rule in the semantics definition, where $\tilde{\st}_n = [\st]_{\vartheta,k}$, or equivalently, $\st = \lfloor \tilde{\st}_n \rfloor$.
      Examining each of the relevant semantic rules,
      we observe that in each case the side condition $\varphi$ follows from the fact that $(\sigma_{n-1}, \sigma_n) \in \sem{\tilde{\st}_n}$.
      Thus we conclude that  $\conf{m'_{n-1}}{\sigma_{n-1}} \semtrans{\lfloor \tilde{\st_{n}} \rfloor} \conf{m'_n}{\sigma_n}$.

    \item If the control flow transition $m'_{n-1} \petrirel{\tilde{\st}_{n}} m'_n$ is derived from \cref{eq:petrify-fork},
      then we have $m'_{n-1} = \{ \langle \texttt{fork $e$ $\vartheta'$()}, \vartheta, k \rangle, \inuse{\vartheta'}{1},\ldots,\inuse{\vartheta'}{k'-1},\notinuse{\vartheta'}{k'} \}$ and
      $m'_n = \{ \langle \Omega,\vartheta, k \rangle, \langle \body{\vartheta'},\vartheta', k' \rangle, \inuse{\vartheta'}{1},\ldots,\inuse{\vartheta'}{k'} \}$, for some $e,\vartheta,\vartheta',k,k'$.
      Further, we have that $\tilde{\st}_n$ is the statement \texttt{\idvar{\vartheta'}{k'}:=\instvar{$e$}{\vartheta}{k}},
      and thus $\lfloor\st_n\rfloor$ is the statement \texttt{fork $e$ $\vartheta$'()}.
      Using \srule{Fork}, it is straightforward to conclude that indeed  $\conf{m'_{n-1}}{\sigma_{n-1}} \semtrans{\lfloor \tilde{\st_{n}} \rfloor} \conf{m'_n}{\sigma_n}$.
    \item The control flow transition can not be derived from \cref{eq:petrify-insuff},
      since then we would have $\insuff{\vartheta'} \in m_n$, violating our assumption on the firing sequence.
    \item If the control flow transition $m'_{n-1} \petrirel{\tilde{\st}_{n}} m'_n$ is derived from \cref{eq:petrify-join},
      then we have
      \[
        m'_{n-1} = \{\langle \texttt{join $e$},\vartheta, k \rangle, \langle \Omega, \vartheta',k' \rangle, \inuse{\vartheta'}{k'}\}
      \]
      \[
        m'_n = \{\langle \Omega,\vartheta, k \rangle, \notinuse{\vartheta'}{k'}\}
      \]
      for some $e,\vartheta,\vartheta',k,k'$.
      Further, we have that $\tilde{\st}_n =\texttt{assume \idvar{\vartheta'}{k'}==\instvar{$e$}{\vartheta}{k}}$,
      and thus $\lfloor\st_n\rfloor$ is the statement \texttt{join $e$}.
      Using \srule{Join}, it is straightforward to conclude that indeed  $\conf{m'_{n-1}}{\sigma_{n-1}} \semtrans{\lfloor \tilde{\st_{n}} \rfloor} \conf{m'_n}{\sigma_n}$.
  \end{itemize}
\end{proof}
%
%
%
%
%
%
\begin{toappendix}
  \subsection{Proof of \cref{lem:exec-fire}}

  For the proof of \cref{lem:exec-fire},
  we proceed in three steps.
  \begin{itemize}
  \item First, we augment the global configurations of the execution
    by assigning an instance ID $k\in\{1,\ldots,\beta\}$ to each local configuration.
    We further define two mappings $\alpha,\gamma:\{1,\ldots,n\} \to \{\bot,1\ldots,\beta\}$
    (where $n+1$ is the length of the execution),
    which indicate the threads involved in making the respective steps in the execution.
    We need two mappings to cover the cases of fork and join,
    where 2 threads are involved.
    We define a notion of \emph{admissibility} for such an augmented execution and the respective thread mappings.

  \item Second, we show inductively that for every execution, there exist an admissible augmented execution and two admissible thread mappings.

  \item Third, we show that from an admissible augmented execution and thread mappings,
    one can extract a corresponding firing sequence of the petrified program, along with the sequence of instantiated states.
  \end{itemize}

  An \emph{augmented local configuration} is a tuple $\langle C, \vartheta, t, s, k\rangle$,
  such that $\langle C, \vartheta, t,  s \rangle$ is a local configuration,
  and $k \in \{\bot,1,\ldots,\beta\}$ is an instance ID.
  An \emph{augmented global configuration} is a pair $(\aug{M}, g)$ such that $\aug{M}$ is a set (not a multiset) of augmented local configurations and $g$ is a global state.
  We call a set $\aug{M}$ of augmented local configurations \emph{conformist},
  if $\aug{M}$ contains no two augmented local configurations with the same thread template and instance ID,
  i.e., for all $\langle C,\vartheta,t,s,k \rangle, \langle C',\vartheta,t',s',k \rangle\in\aug{M}$, we have $C=C'$, $t=t'$, and $s=s'$.

  We use $\lfloor\cdot\rfloor$ do denote the \emph{deaugmentation},
  i.e., we define $\lfloor \langle C, \vartheta, t, s, k\rangle\rfloor := \langle C, \vartheta, t,  s\rangle$.
  We lift $\lfloor\cdot\rfloor$ to sets of augmented local configurations, such that it returns a multiset of local configurations:
  \[
    \lfloor \aug{M}\rfloor:=\lbag\, \langle C, \vartheta, t, s\rangle \mid \langle C, \vartheta, t, s, \vartheta, k\rangle\in \aug{M}\,\rbag
  \]

  Given a sequence of augmented global configurations $\aug{M}_0, \ldots, \aug{M}_n$,
  a \emph{thread mapping} $\alpha$ maps each number $i\in\{0\ldots n-1\}$
  to a pair $\langle \vartheta, k \rangle \in \threadTemplates \times \{\bot,1,\ldots,\beta\}$.
  We use two thread mappings:
  A \emph{current thread mapping} $\alpha$ that indicates which thread executes the next statement,
  and a \emph{secondary thread mapping} $\gamma$ that indicates, for \texttt{fork} resp.\ \texttt{join} statements,
  the forked resp.\ the joined thread.

  \begin{definition}[Awake]
    Let $\aug{M}_1, \aug{M}_2$ be sets of augmented local configurations,
    let $\occ \subseteq \threadTemplates \times \{1,\ldots,\beta\}$
    and let $\st$ be a simple statement.
    A quadruple $\langle \vartheta, k, \vartheta', k' \rangle$ is \emph{$(\aug{M}_1,\aug{M}_2,\occ,\st)$-awake}
    if the following holds:
    \begin{itemize}
      \item If $\st$ is an atomic statement,
           there exist $C,C',t,s,s'$ such that
            $\aug{M}_2=\aug{M}_1 \setminus \{ \langle C, \vartheta, t, s, k \rangle \}\cup\{\langle C',\vartheta, t, s', k\rangle\}$.
      \item If $\st$ is a fork statement of the form \textup{\texttt{fork\;$e$\;$\hat\vartheta$()}},
        then $\hat\vartheta = \vartheta'$,
        and there exist $C, C', t, t', s, s'$
        such that the following two equalities hold:
        \begin{align*}
          \aug{M}_2 & = \aug{M}_1 \setminus \{ \langle C, \vartheta, t, s, k \rangle \} \cup \{
                    \langle C', \vartheta, t, s, k \rangle,
                    \langle\body{\vartheta'}, \vartheta', t', s', k'\rangle
                  \}\\
          k' &= {\min} \{\, \tilde{k} \in \{1,\ldots,\beta\} \mid
            \langle \vartheta',\tilde{k}\rangle \notin \occ
            \land \lnot \exists \tilde{C},\tilde{t}, \tilde{s}\,.\, \langle\tilde{C}, \vartheta', \tilde{t}, \tilde{s},\tilde{k}\rangle \in \aug{M}_1 \,\}
        \end{align*}
      \item If $\st$ is a join statement, then there exist $C,t,t',s,s'$ such that
        \[
          \aug{M}_2 = \aug{M}_1 \setminus \{ \langle \Omega, \vartheta', t', s', k' \rangle, \langle C, \vartheta, t, s, k \rangle \}
          \cup \{\langle C', \vartheta, t, s, k \rangle\}
        \]
    \end{itemize}
  \end{definition}
  Intuitively, the idea is that the augmented local configuration of the awake thread (with template $\vartheta$ and instance ID $k$)
  keeps its instance ID when the next statement is executed,
  and the other augmented local configurations are not modified.
  If the statement is a \texttt{fork} statement, the augmented local configuration of the new thread (with template $\vartheta'$) takes the smallest instance ID $k'$
  that is not yet utilized as instance ID of another thread that has the same template.
  Additionally, threads in $\occ$ are ``occupied'', and may not be used by a \texttt{fork} statement (we need this in an induction proof below).
  For \texttt{join} statements, we require that the joined thread has template $\vartheta'$ and instance ID $k'$.

  \begin{definition}[Admissible]
    We call a triple that consists of a sequence of augmented global configurations $\aug{M}_0,\ldots,\aug{M}_n$,
    a current thread mapping $\alpha$,
    and a thread mapping $\gamma$ \emph{admissible} for an an execution $(M_0,g_0) \semtrans{\st_1} \ldots \semtrans{\st_n} (M_n, g_n)$ if the following holds.
    \begin{itemize}
      \item The instance ID of the augmented initial local configuration is $\bot$, i.e.,
      $$\aug{M}_0=\{ \langle C, \vartheta, t, s, \bot \rangle \}\ \text{ for some }C, \vartheta, t, s.$$
      \item For all $i\in\{0,\ldots n\}$,
      \begin{itemize}
        \item the deaugmentation of the augmented local configurations coincides with the corresponding local configurations, i.e.,
        $\lfloor\aug{M}_i\rfloor = M_i$
        \item and $\aug{M}_i$ is conformist.
      \end{itemize}
      \item For all  $i\in\{0,\ldots n-1\}$,
        if $\alpha(i)=\langle \vartheta,k\rangle$ and $\gamma(i) = \langle \vartheta',k'\rangle$,
        then the quadruple $\langle \vartheta,k,\vartheta',k' \rangle$
        is $(\aug{M}_i, \aug{M}_{i+1}, \emptyset, \st_{i+1})$-awake.
    \end{itemize}
  \end{definition}

  \begin{lemma}\label{lem:exists-admissible}
    For each execution $(M_0,g_0) \semtrans{\st_1} \ldots \semtrans{\st_n} (M_n, g_n)$ whose thread width is at most $\beta$,
    there exist a sequence of augmented global configurations $\aug{M}_0,\ldots,\aug{M}_n$, and thread mappings $\alpha$ and $\gamma$ such that $(\aug{M}_0,\ldots,\aug{M}_n,\alpha, \gamma)$ is admissible.
  \end{lemma}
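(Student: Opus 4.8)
The plan is to proceed by induction on the length $n$ of the execution, constructing the augmentation $\aug{M}_0,\ldots,\aug{M}_n$ together with the two thread mappings one step at a time. For the base case $n=0$, the initial global configuration contains the single local configuration $\langle \body{\main}, \main, \bot, s\rangle$, so I would set $\aug{M}_0 = \{\langle \body{\main}, \main, \bot, s, \bot\rangle\}$, assigning the reserved instance ID $\bot$ to the start thread. This choice immediately satisfies the first admissibility condition; the singleton set is trivially conformist, its deaugmentation is $M_0$, and since there are no steps the awake condition is vacuous, with $\alpha$ and $\gamma$ empty.

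For the inductive step I would first note that the prefix $(M_0,g_0)\semtrans{\st_1}\ldots\semtrans{\st_{n-1}}(M_{n-1},g_{n-1})$ is again an execution, and its thread width is at most that of the full execution, hence at most $\beta$; the induction hypothesis therefore yields admissible data $\aug{M}_0,\ldots,\aug{M}_{n-1},\alpha,\gamma$ for it. It then remains only to define $\aug{M}_n$ and the values $\alpha(n-1),\gamma(n-1)$, leaving the earlier data untouched. The final step $(M_{n-1},g_{n-1})\semtrans{\st_n}(M_n,g_n)$ is derived, via \srule{Frame}, from exactly one base rule of \cref{fig:sem-def}, so I would proceed by case distinction on that rule. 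For an atomic statement a single local configuration $\langle C,\vartheta,t,s\rangle$ steps; since $\lfloor\aug{M}_{n-1}\rfloor = M_{n-1}$, I can pick an augmented copy $\langle C,\vartheta,t,s,k\rangle\in\aug{M}_{n-1}$, set $\alpha(n-1)=\langle\vartheta,k\rangle$ (the value of $\gamma(n-1)$ being irrelevant here), and obtain $\aug{M}_n$ by replacing this copy with the stepped configuration under the same instance ID $k$. For a \srule{Join} step I would additionally pick an augmented copy $\langle\Omega,\vartheta',t',s',k'\rangle$ of the joined thread, record $\gamma(n-1)=\langle\vartheta',k'\rangle$, and remove that copy. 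In each of these cases the three admissibility conditions at index $n$ follow directly from the corresponding clause of the awake definition, and conformity is preserved because the affected threads keep their instance IDs.

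The crux is the \srule{Fork} case, which is also where the thread-width hypothesis is consumed. Here a forking thread of template $\vartheta$ produces a new thread of some template $\vartheta'$, and the awake definition (with $\occ=\emptyset$) forces the new instance ID to be the minimal $k'\in\{1,\ldots,\beta\}$ not already used by a $\vartheta'$-thread of $\aug{M}_{n-1}$; I must show this set is nonempty, i.e., that a free instance ID exists. I would argue as follows: the fork adds exactly one $\vartheta'$-thread, so $M_n$ contains one more $\vartheta'$-thread than $M_{n-1}$; since the execution has thread width at most $\beta$, the multiset $M_n$ contains at most $\beta$ threads of template $\vartheta'$, hence $M_{n-1}$ contains at most $\beta-1$. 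Because $\aug{M}_{n-1}$ is conformist and deaugments to $M_{n-1}$, its $\vartheta'$-threads carry pairwise distinct instance IDs, and the only $\vartheta'$-thread that can carry the reserved ID $\bot$ is the start thread; consequently at most $\beta-1$ of the IDs in $\{1,\ldots,\beta\}$ are occupied, leaving at least one free. Choosing $k'$ as the minimal free ID, I set $\alpha(n-1)=\langle\vartheta,k\rangle$ and $\gamma(n-1)=\langle\vartheta',k'\rangle$ and build $\aug{M}_n$ per the fork clause. Conformity is preserved since $k'$ is free among $\vartheta'$-threads (and in particular differs from the forking thread's ID when $\vartheta=\vartheta'$), the deaugmentation gives $M_n$, and the awake condition holds by construction. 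I expect this counting argument, carefully separating the $\bot$-labelled start thread from the $\beta$ usable IDs, to be the main technical obstacle.
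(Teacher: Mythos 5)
Your proposal is correct and takes essentially the same route as the paper's proof: induction on the length of the execution, case distinction on the rule underlying the last step (atomic statement, fork, join), reuse of the inductive augmentation unchanged, and assignment of the minimal free instance ID in the fork case. If anything, your counting argument for the existence of a free instance ID in $\{1,\ldots,\beta\}$ (bounding the $\vartheta'$-threads of $M_{n-1}$ by $\beta-1$ via the width of the \emph{extended} execution, and separating out the $\bot$-labelled start thread) is more explicit than the paper's one-line justification, which attributes the existence of $k'$ to the thread width of the prefix alone.
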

  \begin{proof}
    We prove this by induction over the length of the execution.

    Base case $n=0$:
    We have an execution that only consists of a global configuration $(M_0, g_0)$ that is initial, i.e. $M_0 = \lbag \langle \body{\main}, \main, \bot, s \rangle \rbag$. Therefore we can define the corresponding augmented global configuration $\aug{M}_0 := \{ \langle \body{\main}, \main, \bot, s, \bot \rangle\}$ . It is obvious to see that $(\aug{M}_0, \alpha, \gamma)$ is admissible, without any restrictions on $\alpha$ and $\gamma$.

    Induction hypothesis:
    Assume we have a sequence of augmented global configurations $\aug{M}_0,\ldots,\aug{M}_n$ for the execution $\eta := (M_0,g_0) \semtrans{\st_1} \ldots \semtrans{\st_n} (M_n, g_n)$ whose thread width is at most $\beta$, and $\alpha$ and $\gamma$ that $(\aug{M}_0,\ldots,\aug{M}_n,\alpha,\gamma)$ is admissible.

    Induction step:
    Given the execution \[(M_0,g_0) \semtrans{\st_1} \ldots \semtrans{\st_n} (M_n, g_n) \semtrans{\st_{n+1}} (M_{n+1}, g_{n+1})\] show that there are augmented global configurations $\aug{M}_0,\dots,\aug{M}_{n+1}$ and thread mappings $\alpha, \gamma$, such that $(\aug{M}_0,\ldots,\aug{M}_n,\aug{M}_{n+1},\alpha,\gamma)$ is admissible.
    By induction hypothesis there are $\aug{M}_0,\ldots,\aug{M}_n,\alpha,\gamma$ such that $(\aug{M}_0,\ldots,\aug{M}_n,\alpha,\gamma)$ is admissible.

    Case distinction over $\st_{n+1}$:
    \begin{itemize}
        \item  $\st_{n+1} \in \atomicstmt$:
            Since $\eta$ is an execution, there are $M, X, \vartheta, t, s, X', s'$ such that:
            \begin{align*}
                M_n &= M \cup \lbag \langle X, \vartheta, t, s \rangle \rbag\\
                M_{n+1} &= M \cup \lbag \langle X', \vartheta, t, s' \rangle \rbag
            \end{align*}
            This follows from one of the rules of \srule{Assume}, \srule{Assert1}, \srule{Assert2}, \srule{AssignGlobal}, \srule{AssignLocal}, \srule{Ite1}, \srule{Ite2}, \srule{While1} or \srule{While2} combined with \srule{Frame}.
            Since $\lfloor \aug{M}_n \rfloor = M_n$ there is a $k \in \{\bot,1,\dots,\beta\}$ such that $\aug{M}_n = \aug{M} \cup \{\langle X, \vartheta, t, s, k \rangle\}$.
            Let us choose $\aug{M}_{n+1} := \aug{M} \cup \{\langle X', \vartheta, t, s', k \rangle\}$, $\alpha(n) := \langle \vartheta, k \rangle$ and $\gamma(n) := \langle \vartheta', k' \rangle$ (for arbitrary $\vartheta', k'$).

            Then $(\aug{M}_0,\ldots,\aug{M}_n,\aug{M}_{n+1},\alpha, \gamma)$ is admissible, because
            \begin{itemize}
                \item $(\aug{M}_0,\ldots,\aug{M}_n,\alpha,\gamma)$ is admissible by induction hypothesis.
                \item $\lfloor \aug{M}_{n+1} \rfloor = M_{n+1}$
                \item $\aug{M}_{n+1}$ is conformist, because $\aug{M}_n$ is conformist by induction hypothesis and the instance IDs do not change from $\aug{M}_n$ to $\aug{M}_{n+1}$.
                \item $\langle \vartheta, k, \vartheta', k' \rangle$ is $(\aug{M}_n, \aug{M}_{n+1}, \emptyset, \st_{i+1})$-awake, since $\st_{n+1} \in \atomicstmt$.
            \end{itemize}

        \item $\st_{n+1} = \texttt{fork}\;e\;\vartheta'\texttt{()}$:
        Since $\eta$ is an execution, there are $M, X, \vartheta, t, s, t', s'$ such that:
        \begin{align*}
            M_n &= M \cup \lbag \langle \texttt{fork}\;e\;\vartheta'\texttt{()};X, \vartheta, t, s \rangle \rbag\\
            M_{n+1} &= M \cup \lbag \langle X, \vartheta, t, s \rangle, \langle body_{\vartheta'}, \vartheta', t', s' \rangle \rbag
        \end{align*}
        This follows from the combination of the rules \srule{Fork} and \srule{Frame}.
        Since $\lfloor \aug{M}_n \rfloor = M_n$ there is a $k \in \{\bot,1,\dots,\beta\}$ such that
        \[\aug{M}_n = \aug{M} \cup \{\langle\texttt{fork}\;e\;\vartheta'\texttt{()};X, \vartheta, t, s, k\rangle\}\]
        Let us choose $\aug{M}_{n+1} := \aug{M} \cup \{\langle X, \vartheta, t, s, k \rangle, \langle body_{\vartheta'}, \vartheta', t', s', k' \rangle \}$ with $ k' := {\min} \{\, \tilde{k} \in \{1,\dots,\beta\} \mid
                    \lnot \exists \tilde{C},\tilde{t}, \tilde{s}\,.\, \langle\tilde{C}, \vartheta', \tilde{t}, \tilde{s},\tilde{k}\rangle \in \aug{M}_n \,\}$,  $\alpha(n) := \langle \vartheta, k \rangle$ and $\gamma(n) := \langle \vartheta', k' \rangle$. There is such a $k'$, because the thread width of $\eta$ is at most $\beta$.

        Then $(\aug{M}_0,\ldots,\aug{M}_n,\aug{M}_{n+1},\alpha, \gamma)$ is admissible, because
        \begin{itemize}
            \item $(\aug{M}_0,\ldots,\aug{M}_n,\alpha,\gamma)$ is admissible by induction hypothesis.
            \item $\lfloor \aug{M}_{n+1} \rfloor = M_{n+1}$
            \item  $\aug{M}_{n+1}$ is conformist, because $\aug{M}_n$ is conformist by induction hypothesis and the only changed instance ID from $\aug{M}_n$ to $\aug{M}_{n+1}$ is $k$, but it is a fresh ID by construction.
            \item $\langle \vartheta, k, \vartheta', k' \rangle$ is $(\aug{M}_n, \aug{M}_{n+1}, \emptyset, \st_{i+1})$-awake.
        \end{itemize}

        \item $\st_{n+1} = \texttt{join}\;e$:
        Since $\eta$ is an execution, there are $M, X, \vartheta, \vartheta', t, s, t', s'$ such that:
        \begin{align*}
            M_n &= M \cup \lbag \langle \texttt{join}\;e;X, \vartheta, t, s \rangle,  \langle \Omega, \vartheta', t', s' \rangle\rbag\\
            M_{n+1} &= M \cup \lbag \langle X, \vartheta, t, s \rangle \rbag
        \end{align*}
        This follows from the combination of the rules \srule{Join} and \srule{Frame}.
        Since $\lfloor \aug{M}_n \rfloor = M_n$ there are $k, k' \in \{\bot,1,\dots,\beta\}$ such that
        \[\aug{M}_n = \aug{M} \cup \{\langle \texttt{join}\;e;X, \vartheta, t, s, k \rangle,  \langle \Omega, \vartheta', t', s', k' \rangle\}\]
        Let us choose $\aug{M}_{n+1} := \aug{M} \cup \{\langle X, \vartheta, t, s, k \rangle\}$, $\alpha(n) := \langle \vartheta, k \rangle$ and $\gamma(n) := \langle \vartheta', k' \rangle$.

        Then $(\aug{M}_0,\ldots,\aug{M}_n,\aug{M}_{n+1},\alpha, \gamma)$ is admissible, because
        \begin{itemize}
            \item $(\aug{M}_0,\ldots,\aug{M}_n,\alpha,\gamma)$ is admissible by induction hypothesis.
            \item $\lfloor \aug{M}_{n+1} \rfloor = M_{n+1}$
            \item  $\aug{M}_{n+1}$ is conformist, because $\aug{M}_n$ is conformist by induction hypothesis and the instance IDs of $\aug{M}_{n+1}$ are only a subset of those from $\aug{M}_n$, only one thread was joined .
            \item $\langle \vartheta, k, \vartheta', k' \rangle$ is $(\aug{M}_n, \aug{M}_{n+1}, \emptyset, \st_{i+1})$-awake.
        \end{itemize}
    \end{itemize}

  \end{proof}

  Now that we have proven the existence of the augmented execution and the two thread mappings,
  we move on to the final step of our proof:
  defining the corresponding firing sequence and the sequence of states.
  Given a conformist set of augmented local configurations $\aug{M}$,
  and a set $\occ\subseteq\threadTemplates\times\{1,\ldots,\beta\}$,
  we define the marking $\marki_\occ(\aug{M})$ as follows.
  \begin{multline*}
    \marki_\occ(\aug{M}):=
      \{\, \langle C, \vartheta, k\rangle \mid \langle C, \vartheta, t, s, k\rangle\in\aug{M} \,\}\\
      \cup \{\, \inuse{\vartheta}{k}\mid \langle C, \vartheta, t, s, k \rangle\in\aug{M} \lor \langle \vartheta,k\rangle \in \occ \,\}\\
      \cup \{\, \notinuse{\vartheta}{k} \mid \lnot \exists C, t, s \,.\, \langle C,\vartheta, t, s, k \rangle \in \aug{M} \land \langle\vartheta,k\rangle \notin \occ \,\}
  \end{multline*}
  We denote by $\state{\aug{M}}{g}$ the Petri state $\sigma : \vars_\mathit{inst} \to \mathbb{Z} \cup \{\true,\false\}$
  such that $\sigma(x) = g(x)$ for all $x\in\globalVars$,
  $\sigma(\instvar{$x$}{\vartheta}{k}) = s(x)$
  and $\sigma(\idvar{\vartheta}{k}) = t$,  where $\langle C, \vartheta, t, s, k\rangle \in \aug{M}$.
  By conformism, $\state{\aug{M}}{g}$ is well-defined.

  \begin{lemma}
    \label{lem:aug-fire}
    Let $\aug{M}_1, \aug{M}_2$ be conformist sets of augmented local configurations,
    let $g_1,g_2$ be global states,
    let $\st$ be a simple statement,
    and let $\occ \subseteq \threadTemplates \times \{1,\ldots,\beta\}$.
    Given a quadruple $\langle\vartheta,k,\vartheta',k'\rangle$ that is $(\aug{M}_1,\aug{M}_2,\occ,\st)$-awake,
    if we have that $(\lfloor \aug{M}_1 \rfloor, g_1) \semtrans{\st} (\lfloor \aug{M}_2\rfloor, g_2)$,
    then it follows that
    \[
      \marki_\occ(\aug{M}_1) \fire{[\st]_{\vartheta,k}^{\vartheta',k'}} \marki_\occ(\aug{M}_2)
      \;\text{ and }\;
      \big( \state{\aug{M}_1}{g_1}, \state{\aug{M}_2}{g_2} \big) \in \sem{[\st]_{\vartheta,k}^{\vartheta',k'}}.
    \]
  \end{lemma}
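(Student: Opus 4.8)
The plan is to prove the lemma by a case distinction on the shape of the simple statement $\st$, matching the three clauses of the definition of awakeness: $\st$ is an atomic statement, a \texttt{fork}, or a \texttt{join}. In each case the awake clause pins down exactly how $\aug{M}_2$ differs from $\aug{M}_1$, while the hypothesis $(\lfloor\aug{M}_1\rfloor, g_1)\semtrans{\st}(\lfloor\aug{M}_2\rfloor, g_2)$ identifies which rule of \cref{fig:sem-def} was applied, together with its side condition and its effect on the global state. I would then discharge the two conclusions separately: firing validity against the matching control-flow clause (\cref{eq:petrify-local}, \cref{eq:petrify-fork}, or \cref{eq:petrify-join}) and the definition of $\marki_\occ$, and membership in $\sem{[\st]_{\vartheta,k}^{\vartheta',k'}}$ against the definition of $\state{\cdot}{\cdot}$.

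The single auxiliary fact I would isolate first is that evaluation commutes with instantiation: for any expression $e$ and any $\langle C,\vartheta,t,s,k\rangle\in\aug{M}$ one has $\sem{\instvar{$e$}{\vartheta}{k}}^{\state{\aug{M}}{g}} = \sem{e}^{s\cup g}$. This is immediate, since instantiation replaces each local variable $x$ by $\instvar{$x$}{\vartheta}{k}$ while $\state{\aug{M}}{g}$ sends $\instvar{$x$}{\vartheta}{k}$ to $s(x)$ and each global to its value under $g$. This fact settles the atomic case at once: the awake clause gives $\aug{M}_2 = \aug{M}_1 \setminus\{\langle C,\vartheta,t,s,k\rangle\}\cup\{\langle C',\vartheta,t,s',k\rangle\}$, so the only token that moves goes from $\langle C,\vartheta,k\rangle$ to $\langle C',\vartheta,k\rangle$ while every $\inuse{}{}$/$\notinuse{}{}$ place is untouched, which is precisely the effect of the \cref{eq:petrify-local} transition; and for the state part, the commutation fact turns the \srule{Assume}-style side condition $\sem{e}^{s\cup g_1}=\true$, resp.\ the \srule{AssignLocal}/\srule{AssignGlobal} right-hand side $\sem{e}^{s\cup g_1}$, into the corresponding statement about $\state{\aug{M}_1}{g_1}$.

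I expect the \texttt{fork} case to be the main obstacle, because here both conclusions hinge on the minimality condition defining $k'$. For firing validity I would argue that for every $j<k'$ the minimality of $k'$ forces either $\langle\vartheta',j\rangle\in\occ$ or the presence in $\aug{M}_1$ of a thread with template $\vartheta'$ and instance $j$; in either case the definition of $\marki_\occ$ puts a token on $\inuse{\vartheta'}{j}$, so all predecessor places $\inuse{\vartheta'}{1},\ldots,\inuse{\vartheta'}{k'-1}$ of the \cref{eq:petrify-fork} transition are marked, while $\notinuse{\vartheta'}{k'}$ is marked because $k'$ is available. Firing then flips $\notinuse{\vartheta'}{k'}$ to $\inuse{\vartheta'}{k'}$ and adds the place $\langle\body{\vartheta'},\vartheta',k'\rangle$, matching $\marki_\occ(\aug{M}_2)$. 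For the state part, the instantiated statement \texttt{\idvar{\vartheta'}{k'}:=\instvar{$e$}{\vartheta}{k}} writes only $\idvar{\vartheta'}{k'}$, and by \srule{Fork} the spawned thread's ID recorded in $\aug{M}_2$ is $t'=\sem{e}^{s\cup g_1}$, which the commutation fact identifies with $\sem{\instvar{$e$}{\vartheta}{k}}^{\state{\aug{M}_1}{g_1}}$. The subtle point to address is the spawned thread's arbitrary local state $s'$: since $\aug{M}_1$ contains no thread $(\vartheta',k')$, the definition of $\state{\cdot}{\cdot}$ leaves the variables $\instvar{$x$}{\vartheta'}{k'}$ unconstrained in $\state{\aug{M}_1}{g_1}$, so I may take them to agree with $s'$, after which the lone assignment to $\idvar{\vartheta'}{k'}$ carries $\state{\aug{M}_1}{g_1}$ exactly to $\state{\aug{M}_2}{g_2}$.

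The \texttt{join} case is then routine and symmetric. The awake clause together with \srule{Join} yields that $\aug{M}_1$ contains both $\langle\texttt{join\;$e$};X,\vartheta,t,s,k\rangle$ and a terminated thread $\langle\Omega,\vartheta',t',s',k'\rangle$ whose ID is $t'=\sem{e}^{s\cup g_1}$; the three predecessor places of the \cref{eq:petrify-join} transition are therefore all marked in $\marki_\occ(\aug{M}_1)$ (the place $\inuse{\vartheta'}{k'}$ because that thread is present), and firing removes the joined thread's place and flips $\inuse{\vartheta'}{k'}$ to $\notinuse{\vartheta'}{k'}$, matching $\marki_\occ(\aug{M}_2)$ (here $\langle\vartheta',k'\rangle\notin\occ$, as in the intended use $\occ=\emptyset$). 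Finally, the guard \texttt{assume \idvar{\vartheta'}{k'}==\instvar{$e$}{\vartheta}{k}} evaluates to $\true$ under $\state{\aug{M}_1}{g_1}$, because $\state{\aug{M}_1}{g_1}(\idvar{\vartheta'}{k'})=t'=\sem{e}^{s\cup g_1}=\sem{\instvar{$e$}{\vartheta}{k}}^{\state{\aug{M}_1}{g_1}}$, and the state is otherwise unchanged, giving membership in $\sem{[\st]_{\vartheta,k}^{\vartheta',k'}}$.
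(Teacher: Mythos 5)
Your proposal is correct in its essential content, but it is structured differently from the paper's proof. The paper argues by induction over the \emph{derivation} of $(\lfloor \aug{M}_1 \rfloor, g_1) \semtrans{\st} (\lfloor \aug{M}_2\rfloor, g_2)$: the base cases are the rules with singleton (or two-element) multisets, and the inductive case is \srule{Frame}, where the spectator configurations $\aug{N}$ are split off, their template/instance pairs are added to $\occ$ (this is precisely why the lemma carries the parameter $\occ$ at all), the induction hypothesis is applied to the smaller step, and the frame is glued back on. You instead do a direct case analysis on the shape of $\st$, keep all spectator threads in place, and verify the marking equations globally; this is why your fork case needs the (correct) observation that minimality of $k'$ forces each $\inuse{\vartheta'}{j}$ with $j<k'$ to be marked either via $\occ$ or via a thread present in $\aug{M}_1$. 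Your route makes the key computations more explicit than the paper does — instantiation commuting with evaluation, and in particular the choice of completion for the spawned thread's unconstrained variables \instvar{$x$}{\vartheta'}{k'}, which the paper waves through with ``it is easy to see'' — but it buries the treatment of \srule{Frame} in the unproven inversion claim that the hypothesis ``identifies which rule of the semantics was applied, together with its side condition.'' Making that precise (every derivation is a stack of \srule{Frame} applications around one base rule, whose redex can be matched with the awake thread's configuration, so that e.g.\ the transition $\langle C,\vartheta,k\rangle \petrirel{[\st]_{\vartheta,k}} \langle C',\vartheta,k\rangle$ really exists in the net) is essentially the work the paper's induction performs; you should state and argue it rather than assume it.

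Your parenthetical in the join case ($\langle\vartheta',k'\rangle\notin\occ$, ``as in the intended use $\occ=\emptyset$'') deserves a comment: it flags a genuine corner. If $\occ$ contained the joined thread's pair $\langle\vartheta',k'\rangle$, then $\marki_\occ(\aug{M}_2)$ would mark $\inuse{\vartheta'}{k'}$, whereas firing the transition from \cref{eq:petrify-join} puts the token into $\notinuse{\vartheta'}{k'}$, so the stated conclusion would fail. The lemma therefore implicitly requires $\occ$ to be disjoint from the template/instance pairs occurring in $\aug{M}_1$. The paper's proof maintains this invariant silently — the $\occ'$ built in its \srule{Frame} case consists of threads of $\aug{N}$, disjoint from those of $\aug{M}_1'$ by conformism, starting from $\occ=\emptyset$ in \cref{lem:exec-fire} — but never states it, and its own join case is sloppy on exactly this point. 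So this restriction is not a defect of your proof relative to the paper's; just make the disjointness an explicit hypothesis (it holds in the only context where the lemma is used), rather than an aside.
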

  \begin{proof}
    By induction over the semantic rule from which $(\lfloor \aug{M}_1 \rfloor, g_1) \semtrans{\st} (\lfloor \aug{M}_2\rfloor, g_2)$ is derived.
    \begin{itemize}
      \item If the rule is any of
        \srule{AssignGlobal}, \srule{AssignLocal}, \srule{Assume}, \srule{Ite1}, \srule{Ite2}, \srule{While1}, \srule{While2}, \srule{Assert1}, \srule{Assert2},
        we know that $\aug{M}_1 = \{ \langle C, \vartheta, t, s, k \rangle\}$
        and $\aug{M}_2 = \{ \langle X, \vartheta, t, s ', k'\rangle \}$ for some $C, X, t, s, s', k, k'$.
        Further, for all these rules, we have that $\st_n$ is an atomic statement.
        By awakeness, it follows that $k=k'$.
        Then it follows that $\marki_\occ(\aug{M}_1)$ contains the place $\langle C,\vartheta,k\rangle$,
        and $\marki_\occ(\aug{M}_2) = \marki_\occ(\aug{M}_1) \setminus \{\langle C,\vartheta,k\rangle\} \cup \{\langle X,\vartheta,k\rangle\}$.
        Thus we have, according to \cref{eq:petrify-local}, that $\marki_\occ(\aug{M}_1) \fire{[\st]_{\vartheta,k}} \marki_\occ(\aug{M}_2)$.
        Furthermore, it is easy to see that $\big( \state{\aug{M}_1}{g_1}, \state{\aug{M}_2}{g_2} \big) \in \sem{[\st]_{\vartheta,k}}$
        for each of the above semantic rules.

      \item If the rule is \srule{Fork},
        and taking into account awakeness,
        we have
        \begin{align*}
          \aug{M}_1 &= \{ \langle \texttt{fork\;$e$\;$\vartheta'$()};X, \vartheta, t, s, k \rangle \}\\
          \aug{M}_2 &= \{\langle X, \vartheta, t, s, k \rangle, \langle \body{\vartheta'}, \vartheta', \sem{e}^{s\cup g_1}, s', k'\rangle \}
        \end{align*}
        for some $X,e,t,t',s,s'$.

        It follows that $\marki_\occ(\aug{M}_1)$ contains the place $\langle\texttt{fork\;$e$\;$\vartheta'$()};X, \vartheta, k\rangle$,
        as well as the places $\inuse{\vartheta'}{\tilde{k}}$ for all $\tilde{k} < k'$ and $\notinuse{\vartheta'}{k'}$.
        Furthermore, awakeness implies that
        \begin{multline*}
          \marki_\occ(\aug{M}_2) = \marki_\occ(\aug{M}_1) \setminus \{ \langle\texttt{fork\;$e$\;$\vartheta'$()};X, \vartheta, k\rangle, \notinuse{\vartheta'}{k'} \} \\
          \cup \{ \langle X, \vartheta, k \rangle, \langle\body{\vartheta'},\vartheta',k'\rangle, \inuse{\vartheta'}{k'} \}
        \end{multline*}
        It holds that $[\st]_{\vartheta,k}^{\vartheta',k'}$ is the statement \texttt{\idvar{\vartheta'}{k'}:=\instvar{$e$}{\vartheta}{k}}.
        According to \cref{eq:petrify-fork}, we thus have $\marki_\occ(\aug{M}_1) \fire{[\st]_{\vartheta,k}^{\vartheta',k'}} \marki_\occ(\aug{M}_2)$.
        Furthermore, it is easy to see that $\big( \state{\aug{M}_1}{g_1}, \state{\aug{M}_2}{g_2} \big) \in \sem{[\st]_{\vartheta,k}^{\vartheta',k'}}$.
      \item If the rule is \srule{Join},
        and taking into account awakeness,
        we have
        \begin{align*}
          \aug{M}_1 &= \{ \langle \texttt{join\;$e$};X, \vartheta, t, s, k \rangle, \langle \Omega,\vartheta', \sem{e}^{s\cup g_1}, s', k' \rangle \}\\
          \aug{M}_2 &= \{ \langle X, \vartheta, t, s, k \rangle \}
        \end{align*}
        for some $X,e,t,s,s'$.

        It follows that $\marki_\occ(\aug{M}_1)$ contains the places $\langle \texttt{join\;$e$};X, \vartheta, k \rangle$,
        $\langle \Omega,\vartheta', k' \rangle$ and $\inuse{\vartheta'}{k'}$.
        Furthermore, awakeness implies that
        \[
          \marki_\occ(\aug{M}_2) = \marki_\occ(\aug{M}_1) \setminus \{ \langle \texttt{join\;$e$};X, \vartheta, k \rangle, \langle\Omega,\vartheta', k'\rangle, \} \cup \{ \langle X, \vartheta, k \rangle\}
        \]
        It holds that $[\st]_{\vartheta,k}^{\vartheta',k'}$ is the statement \texttt{assume\;\idvar{\vartheta'}{k'}==\instvar{$e$}{\vartheta}{k}}.
        According to \cref{eq:petrify-join}, we thus have $\marki_\occ(\aug{M}_1) \fire{[\st]_{\vartheta,k}^{\vartheta',k'}} \marki_\occ(\aug{M}_2)$.
        Furthermore, it is easy to see that $\big( \state{\aug{M}_1}{g_1}, \state{\aug{M}_2}{g_2} \big) \in \sem{[\st]_{\vartheta,k}^{\vartheta',k'}}$.
      \item If the rule is \srule{Frame}, let $\aug{N},\aug{M}_1',\aug{M}_2'$ be sets of augmented local configurations,
        such that $\aug{M}_1 = \aug{M}_1' \uplus \aug{N}$, $\aug{M}_2 = \aug{M}_2' \uplus \aug{N}$,
        and $(\lfloor \aug{M}_1' \rfloor, g_1) \semtrans{\st} (\lfloor \aug{M}_2' \rfloor, g_2)$.
        Subsets of a conformist set are always conformist,
        and if we set $\occ' := \occ \cup \{\langle\hat\vartheta,\hat{k}\rangle \mid \langle C,\hat\vartheta,t,s,\hat{k}\in\aug{N}\}$,
        then $\langle \vartheta, k \rangle$ is $(\aug{M}_1', \aug{M}_2', \occ', \st)$-awake.
        Thus we know inductively that
        $\marki_{\occ'}(\aug{M}_1') \fire{[\st]_{\vartheta,k}} \marki_{\occ'}(\aug{M}_2')$
        and $\big( \state{\aug{M}_1'}{g_1}, \state{\aug{M}_2'}{g_2} \big) \in \sem{[\st]_{\vartheta,k}}$.
        Note that
        \begin{align*}
          \marki_\occ(\aug{M}_1) &= \marki_{\occ'}(\aug{M}_1') \cup \{\,\langle C,\vartheta',k'\rangle \mid \langle C, \vartheta',t,s,k\rangle \in \aug{N}\ ,\}\\
          \marki_\occ(\aug{M}_2) &= \marki_{\occ'}(\aug{M}_2') \cup \{\,\langle C,\vartheta',k'\rangle \mid \langle C, \vartheta',t,s,k\rangle \in \aug{N}\ ,\}
        \end{align*}
        Since the same places are added on both sides of the firing relation, we have $\marki_\occ(\aug{M}_1) \fire{[\st]_{\vartheta,k}^{\vartheta',k'}} \marki_\occ(\aug{M}_2)$.

        Finally, we observe that $\state{\aug{M}_1'}{g_1}$ and $\state{\aug{M}_1}{g_1}$ coincide on all global variables,
        as well as all instantiated variables $\instvar{$x$}{\vartheta}{k}$ and all variables \idvar{\vartheta}{k}
        such that $\langle C,\vartheta,t,s,k\rangle \in \aug{M}_1$ for any $C,t,s$.
        The analogous holds for $\state{\aug{M}_2'}{g_2}$ and $\state{\aug{M}_2}{g_2}$.
        Furthermore, by examining the control flow relation $\petrirel{\cdot}$,
        it is easy to see that $\st$ can only refer to such variables.
        Thus we conclude that $\big(\state{\aug{M}_1}{g_1}, \state{\aug{M}_2}{g_2}\big)  \in \sem{\st}$.
    \end{itemize}
  \end{proof}

\end{toappendix}
\begin{lemmarep}\label{lem:exec-fire}
  Let $(M_0,g_0) \semtrans{\st_1} \ldots \semtrans{\st_n} (M_n, g_n)$ be an execution whose thread width is at most $\beta$. 
  Then there exists a firing sequence $m_0 \fire{\tilde{\st_1}} \ldots \fire{\tilde{\st_n}} m_n$
  of $\petriProg{\beta}{\prog}$,
  and a sequence $\sigma_0,\ldots,\sigma_n$ of states over $\vars_\mathrm{inst}$
  such that
  \begin{itemize}
    \item $\conf{m_i}{\sigma_i} = (M_i, g_i)$ for all $i\in\{0,\ldots,n\}$;
    \item and $(\sigma_{i-1},\sigma_i)\in\sem{\tilde{\st_i}}$ for all $i\in\{1,\ldots,n\}$.
  \end{itemize}
\end{lemmarep}
\begin{proofsketch}
  We first prove inductively that we can assign instance IDs to the local configurations in each step of the execution in a consistent manner.
  Given such an ``augmented'' execution,
  one can extract the markings $m_0,\ldots,m_n$ and the states $\sigma_0,\ldots,\sigma_n$ in a straightforward way.
%
%

\end{proofsketch}
\begin{proof}
  Let $\aug{M}_0, \ldots, \aug{M}_n$ and $\alpha,\gamma$ be admissible for the given execution.
  Such a sequence and mappings $\alpha,\gamma$ always exist by \cref{lem:exists-admissible}.
  We use them to construct the firing sequence.
  Specifically, we show that
  \[
    \marki_\emptyset(\aug{M}_0) \fire{[\st_1]_{\alpha(1)}^{\gamma(1)}} \ldots \fire{[\st_n]_{\alpha(n)}^{\gamma(n)}} \marki_\emptyset(\aug{M}_n)
  \]
  is a firing sequence, and $\big(\state{\aug{M}_{i-1}}{g_{i-1}}, \state{\aug{M}_i}{g_i}\big)  \in \sem{[\st_i]_{\alpha(i)}^{\gamma(i)}}$ for all $i\in\{1,\ldots,n\}$.

  First, we show that $\marki_\emptyset(\aug{M}_0)$ is the initial marking.
  Since $(M_0, g_0)$ is initial, we have $M_0 = \lbag \langle \body{\main},\main,\bot,s \rangle\rbag$ for some local state $s$.
  By admissibility, $\lfloor \aug{M}_0 \rfloor = M_0$, and furthermore $\aug{M}_0 = \{\langle C,\vartheta,t,s,\bot\rangle\}$ for some $C,\vartheta,t,s$.
  It follows that $\aug{M}_0 = \{\langle \body{\main},\main,\bot,s,\bot\rangle\}$,
  and hence $\marki_\emptyset(\aug{M}_0) = m_\init$.

  Second, for any $i\in\{1,\ldots,n\}$ we must show that $\marki_\emptyset(\aug{M}_{i-1}) \fire{[\st_{i}]_{\alpha(i)}^{\gamma(i)}} \marki_\emptyset(\aug{M}_{i})$
  and that $\big(\state{\aug{M}_{i-1}}{g_{i-1}}, \state{\aug{M}_i}{g_i}\big)  \in \sem{[\st_i]_{\alpha(i)}^{\gamma(i)}}$.
  But this directly follows from \cref{lem:aug-fire}.
\end{proof}

\begin{toappendix}
\subsection{Proof of \cref{thm:thread-bound-iff}}

\begin{lemma}
  Let  $m_0 \fire{\st_1} \ldots \fire{\st_n} m_n$ be a firing sequence of $\petriProg{\beta}{\prog}$,
  and let $\sigma_0,\ldots,\sigma_n$ be a sequence of instantiated states
  such that $(\sigma_{i-1},\sigma_i)\in\sem{\st_i}$ for all $i\in\{1,\ldots,n\}$
  and $m_n\cap\spec_\safe \neq \emptyset$.

  Then there always exists a firing sequence
  $m'_0 \fire{\st'_1} \ldots \fire{\st'_{n'}} m'_{n'}$ of $\petriProg{\beta}{\prog}$,
  and a sequence of instantiated states $\sigma'_0,\ldots,\sigma'_{n'}$
  such that $(\sigma'_{i-1},\sigma'_i)\in\sem{\st'_i}$ for all $i\in\{1,\ldots,n'\}$
  and $m'_n \cap \spec_\safe\neq\emptyset$,
  \emph{and} $\insuff{\vartheta} \notin m'_i$ for all $i,\vartheta$.
\end{lemma}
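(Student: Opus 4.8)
The plan is to prove the statement by induction on the number of indices $i$ for which $\st_i$ is an insuff-transition, i.e.\ a transition of the form \cref{eq:petrify-insuff}. In the base case this number is $0$; since the initial marking contains no insuff-place and such places receive a token only through \cref{eq:petrify-insuff}, no marking of the sequence contains an $\insuff{\vartheta}$-place, so the sequence itself witnesses the claim. For the inductive step, let $i$ be the \emph{first} index at which an insuff-transition fires, say for template $\vartheta'$, so that $\st_i$ has preset $\{\ell,\inuse{\vartheta'}{1},\ldots,\inuse{\vartheta'}{\beta}\}$ (with $\ell$ the forking thread's control location) and postset $\{\insuff{\vartheta'}\}$. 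I would delete $\st_i$ together with the state $\sigma_i$, replay the suffix $\st_{i+1},\ldots,\st_n$ from the unchanged marking $m_{i-1}$, and thereby obtain a valid firing sequence with one fewer insuff-transition that still reaches $\spec_\safe$; the induction hypothesis then finishes the proof.

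The crux — and the step I expect to be the main obstacle — is to show that the suffix can indeed be replayed, for which I first characterise what can fire after $\st_i$ in the original run. Firing $\st_i$ empties all places $\inuse{\vartheta'}{k}$, and since $\st_i$ requires a token in each of them while $m_{i-1}$ carries no insuff-token (as $i$ is the first such index), coherence of the reachable marking $m_{i-1}$ forces every $\notinuse{\vartheta'}{k}$ to be empty already at $m_{i-1}$. Thus both families of $\vartheta'$-places are empty from $m_i$ on and stay empty: the only transitions that could refill them are forks, joins, or further insuff-transitions for $\vartheta'$, each of which needs a token in some $\inuse{\vartheta'}{k}$ or $\notinuse{\vartheta'}{k}$ in its preset. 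Crucially, the surviving $\vartheta'$-transitions are then exactly the local transitions of \cref{eq:petrify-local}, which act solely on program control locations and ignore the in-use/not-in-use places. A token-counting argument along the same lines shows that the control location $\ell$ is likewise never refilled, because the forking thread has been absorbed into the $\insuff{\vartheta'}$ sink and cannot be revived without a fresh not-in-use token for its own template.

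Granting this, the replay is routine. I would maintain the invariant that the new marking equals the corresponding original marking with the $\insuff{\vartheta'}$-token removed and the tokens $\ell,\inuse{\vartheta'}{1},\ldots,\inuse{\vartheta'}{\beta}$ added. By the characterisation above no surviving transition touches any of these places, so removing the $\insuff{\vartheta'}$-token never disables a transition (insuff-places occur in no preset), adding the others never disables one either, and every $\st_j$ with $j>i$ remains enabled with unchanged effect; hence the invariant is preserved and the new firing sequence is valid. For the accompanying states I would use that the label of $\st_i$ is \texttt{assume true}, whose semantics is the identity, so $\sigma_{i-1}=\sigma_i$, and dropping $\sigma_i$ leaves consecutive states related by the appropriate $\sem{\cdot}$. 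Finally, the $\lightning$-place witnessing $m_n\cap\spec_\safe\neq\emptyset$ is not an insuff-place, so the invariant keeps it in the new final marking, and the number of insuff-transitions has strictly decreased; applying the induction hypothesis yields the desired insuff-free firing sequence.
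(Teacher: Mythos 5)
Your proof is correct and takes essentially the same route as the paper's: both locate the first firing of a transition from \cref{eq:petrify-insuff} (whose label \texttt{assume true} leaves the state unchanged, and whose only effect is to swallow tokens into the sink place $\insuff{\vartheta'}$), delete it, replay the suffix, and iterate over the remaining insuff-firings. The only divergence is in level of detail: the paper justifies the replay by a one-line monotonicity remark (any transition enabled in $m_j$ is already enabled in $m_{j-1}$, because insuff-places have no outgoing transitions), whereas you additionally verify that the suffix transitions never touch the restored places, which yields the exact marking invariant and in particular preserves 1-safety.
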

\begin{proof}
  If none of the $m_i$ contains a place $\insuff{\vartheta}$, we are done.

  Otherwise, let $j > 0$ be the first index such that $\insuff{\vartheta'} \in m_j$.
  Then the firing $m_{j-1} \fire{\st_j} m_j$ must be from \cref{eq:petrify-insuff},
  and thus $\st_j = \texttt{assume true}$.
  It follows that $\sigma_j = \sigma_{j-1}$,
  and $m_j = m_{j-1} \setminus \{ \inuse{\vartheta'}{1}, \ldots, \inuse{\vartheta'}{\beta}, \langle \texttt{fork\;$e$\;$\vartheta'$()};X, \vartheta, k \rangle \} \cup \{ \insuff{\vartheta'} \}$
  for some $e,X,\vartheta,k$.
  Since places $\insuff{\vartheta'}$ have no outgoing transitions,
  any transition enabled in $m_j$ is already enabled in $m_{j-1}$.
  Further, $\sigma_j=\sigma_{j-1}$.
  Hence we can omit the $j$-th transition, and omit $\sigma_j$ from the state sequence.
  We arrive at a firing sequence and a sequence of states as described.
\end{proof}
\end{toappendix}

%
%
\begin{theoremrep}[Thread Width Detection]%
  \label{thm:thread-bound-iff}%
  The Petri program $\petriProg{\beta}{\prog}$ satisfies the bound specification $\spec_\bound$
  iff
  the thread width for $\prog$ is at most $\beta$.
\end{theoremrep}
\begin{proofsketch}
  Given an execution with a thread width greater than $\beta$,
  we apply \cref{lem:exec-fire} to the longest prefix of the execution
  such that the thread width of the prefix is at most $\beta$.
  The firing sequence and the sequence of states given by the lemma can be extended to reach a place $\insuff{\vartheta}$.
  For the reverse implication, we proceed analogously
  by applying \cref{lem:fire-exec} to the longest prefix of a given firing sequence that does not put a token into a place $\insuff{\vartheta}$.
  The resulting execution can be extended to an execution with thread width greater than $\beta$.
\end{proofsketch}
\begin{proof}
  We first show that, if the Petri program $\petriProg{\beta}{\prog}$ satisfies its bound specification,
  then the thread width for $\prog$ is at most $\beta$.
  We show this by contraposition:
  Suppose that the thread width for $\prog$ is greater than $\beta$.
  Then there would exist an execution $(M_0,g_0) \semtrans{\st_1} \ldots \semtrans{\st_n} (M_n, g_n)$
  such that some $M_i$ had strictly more than $\beta$ thread instances of some template $\vartheta'$.
  Wlog.\ we assume that the execution is minimal,
  i.e., for all $i \in \{0,\ldots,n-1\}$ we have at most $\beta$ thread instances (for all templates) in $M_i$.
  Let $m_0 \fire{\tilde{\st_1}} \ldots \fire{\tilde{\st_{n-1}}} m_{n-1}$ be the corresponding firing sequence,
  and let $\sigma_0,\ldots,\sigma_{n-1}$ the corresponding sequence of states,
  as given by \cref{lem:exec-fire}.
  In particular, we have $\conf{m_{n-1}}{\sigma_{n-1}} = (M_{n-1}, g_{n-1})$.
  The only semantic rule that increases the number of threads is \srule{Fork},
  and it increases the number by exactly 1.
  Hence we know that $M_{n-1}$ must have $\beta$ thread instances of template $\vartheta'$,
  and thus by definition of $\confOp$ and coherence of $m_{n-1}$,
  we conclude that $\inuse{\vartheta'}{1}, \ldots, \inuse{\vartheta'}{\beta} \in m_{n-1}$.
  Furthermore, since $(M_{n-1},g_{n-1}) \semtrans{\texttt{fork\;$e$\;$\vartheta'$()}} (M_n, g_n)$ (for some expression $e$),
  we know that there exists a local configuration $\langle C, \vartheta, t, s \rangle \in M_{n-1}$,
  where $C = \texttt{fork e $\vartheta'$()};X$ with $X\in\Command\cup\{\Omega\}$.
  By definition of $\confOp$, we must have $\langle C, \vartheta, k \rangle\in m_{n-1}$ for some $k$.
  It follows from \cref{eq:petrify-insuff} that
  $m_{n-1} \fire{\texttt{assume true}} m_n$,
  where $m_n := m_{n-1} \setminus \{\inuse{\vartheta'}{1}, \ldots, \inuse{\vartheta'}{\beta},\langle C, \vartheta, k \rangle\} \cup \{\insuff{\vartheta'},\langle X,\vartheta, k\rangle \}$.
  Furthermore, we set $\sigma_k := \sigma_{k-1}$.
  It follows that the firing sequence $m_0\fire{\tilde{\st}_1} \ldots \fire{\tilde{\st}_{n-1}} m_{n-1} \fire{\texttt{assume\;true}} m_n$
  and the states $\sigma_0,\ldots,\sigma_n$ form a counterexample to the bound specification $\spec_\bound$.
  Thus we have shown that $\petriProg{\beta}{\prog}$ violates its bound specification.

  For the reverse implication, we again proceed by contraposition.
  Suppose $m_0 \fire{\tilde{\st_1}} \ldots \fire{\tilde{\st_n}} m_n$ is an accepting firing sequence of $\petriProg{\beta}{\prog}$,
  and $\sigma_0,\ldots,\sigma_n$ a sequence of Petri states with $(\sigma_{i-1}, \sigma_i) \in \sem{\tilde{\st_i}}$ for all $i$.
  Wlog.\ we assume that the firing sequence is minimal, i.e., none of the markings up to $m_{n-1}$ contains the place $\insuff{\vartheta}$.
  From \cref{lem:fire-exec},
  it follows that $\conf{m_0}{\sigma_0} \semtrans{\st_1} \ldots \semtrans{\st_{n-1}} \conf{m_{n-1}}{\sigma_{k-1}}$ is an execution,
  for some statements $\st_1,\ldots,\st_n$.
  The marking $m_{n-1}$ must enable a transition according to \cref{eq:petrify-insuff},
  hence $\langle C, \vartheta, k\rangle \in m_{n-1}$ where $C = \texttt{fork e $\vartheta'$()};X$ for some $X\in\Command\cup\{\Omega\}$.
  Then there exists a local configuration $\langle C, \vartheta, t, s\rangle$ in $M_{n-1}$.
  By \srule{Fork}, we have that $(M_{n-1}, g_{n-1}) \semtrans{\texttt{fork $e$ $\vartheta'$()}} (M_n, g_n)$,
  where $g_n := g_{n-1}$ and $M_n := M_{n-1} \setminus \lbag \langle C, \vartheta, t, s\rangle \rbag \uplus \lbag \langle X, \vartheta, t, s\rangle, \langle\body{\vartheta}, \sem{e}^{s\cup g_{n-1}}, s' \rangle \rbag$.
  Furthermore, since $m_{n-1}$ enables a transition according to \cref{eq:petrify-insuff},
  we must have $\inuse{\vartheta'}{1}, \ldots, \inuse{\vartheta'}{\beta} \in m_{n-1}$.
  By coherence and definition of $\confOp$,
  there must already exist $\beta$ local configurations for the template $\vartheta'$ in $M_{n-1}$.
  Hence $M_n$ has more than $\beta$ thread instances,
  and thus the the thread width of the execution $(M_0,g_0) \semtrans{\st_1} \ldots \semtrans{\st_{n-1}} (M_{n-1}, g_{n-1}) \semtrans{\texttt{fork\;$e$\;$\vartheta'$()}} (M_n,g_n)$ is greater than $\beta$.
\end{proof}

\begin{theorem}[Soundness]
  \label{thm:soundness}
  If the Petri programs $\petriProg{\beta}{\prog}$ satisfies both its safety and its bound specifications,
  then the \proglang program $\prog$ is correct.
\end{theorem}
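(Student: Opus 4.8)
The plan is to argue by contraposition: assuming that $\prog$ is incorrect, I will show that $\petriProg{\beta}{\prog}$ must violate at least one of its two specifications, contradicting the hypothesis. So suppose $\prog$ is incorrect. Then there is an erroneous execution, and by truncating it at the first step in which $\lightning$ surfaces, I obtain an execution $(M_0,g_0)\semtrans{\st_1}\ldots\semtrans{\st_n}(M_n,g_n)$ such that some local configuration $\langle\lightning,\vartheta,t,s\rangle\in M_n$. (A prefix of an execution is again an execution, since it still begins with the initial configuration $M_0$.) The argument then splits on the thread width of this execution.

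First I would handle the case where the thread width of this execution exceeds $\beta$. Since the thread width of $\prog$ is the supremum over the thread widths of all its executions, it is then also greater than $\beta$. By \cref{thm:thread-bound-iff}, this is equivalent to $\petriProg{\beta}{\prog}$ \emph{not} satisfying the bound specification $\spec_\bound$, which contradicts the assumption.

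Next I would treat the case where the thread width of the execution is at most $\beta$. Here I invoke \cref{lem:exec-fire} to obtain a firing sequence $m_0\fire{\tilde{\st_1}}\ldots\fire{\tilde{\st_n}}m_n$ of $\petriProg{\beta}{\prog}$ and a sequence of states $\sigma_0,\ldots,\sigma_n$ with $\conf{m_i}{\sigma_i}=(M_i,g_i)$ for all $i$ and $(\sigma_{i-1},\sigma_i)\in\sem{\tilde{\st_i}}$. The key remaining step is to connect the occurrence of $\lightning$ in $M_n$ with a bad place. By the definition of $\confOp$, every local configuration of $\conf{m_n}{\sigma_n}$ arises from some control location in $m_n$; hence the presence of $\langle\lightning,\vartheta,t,s\rangle$ in $M_n$ forces $\langle\lightning,\vartheta,k\rangle\in m_n$ for some instance ID $k$. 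Since $\langle\lightning,\vartheta,k\rangle\in\spec_\safe$, we get $m_n\cap\spec_\safe\neq\emptyset$, so the firing sequence together with $\sigma_0,\ldots,\sigma_n$ is a counterexample to $\spec_\safe$, again contradicting the hypothesis.

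The heavy lifting has already been done in \cref{lem:exec-fire} and \cref{thm:thread-bound-iff}, so what remains is largely routine. The step I expect to require the most care is the case split itself: I must argue that these two lemmas jointly cover \emph{every} erroneous execution, so that no bug can slip through undetected. The intuition is that the two cases are complementary by construction: if the thread limit $\beta$ is too small to faithfully replay the execution as a firing sequence, the bound specification detects this, and otherwise \cref{lem:exec-fire} yields a faithful firing sequence that reaches a $\lightning$-place and hence violates the safety specification. The only genuinely delicate bookkeeping is the truncation argument and the use of the $\confOp$ definition to transfer the failure marker $\lightning$ from a local configuration to a place of $\spec_\safe$.
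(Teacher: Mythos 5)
Your proof is correct and takes essentially the same route as the paper: a contraposition argument that combines \cref{thm:thread-bound-iff} (to handle executions whose thread width exceeds $\beta$) with \cref{lem:exec-fire} and the definition of $\confOp$ (to turn an erroneous execution of width at most $\beta$ into a counterexample to $\spec_\safe$). Your explicit case split on the execution's thread width and the truncation at the first occurrence of $\lightning$ merely spell out steps the paper leaves implicit --- the paper instead derives the bound on the program's thread width upfront from $\spec_\bound$ via \cref{thm:thread-bound-iff}, which makes your first case vacuous and licenses the application of \cref{lem:exec-fire} to any erroneous execution.
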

\begin{proof}
  From the fact that $\petriProg{\beta}{\prog}$ satisfies the bound specification, we conclude by \cref{thm:thread-bound-iff}
  that the thread width for $\prog$ is at most $\beta$.
  Contrapositively, we prove that if $\prog$ is incorrect, then $\petriProg{\beta}{\prog}$ does not satisfy its safety specification.
  Suppose that $(M_0, g_0) \semtrans{\st_1} \ldots \semtrans{\st_n} (M_n, g_n)$ is an erroneous execution.
  \Cref{lem:exec-fire} gives us a corresponding firing sequence $m_0 \fire{\tilde{\st_1}} \ldots \fire{\tilde{\st_n}} m_k$
  and a sequence of states $\sigma_0,\ldots,\sigma_k$.
  Since $M_n$ contains some local configuration $\langle \lightning, \vartheta, t, s \rangle$,
  by definition of $\confOp$ we must have a place $\langle \lightning, \vartheta, k \rangle \in m_n$.
  Thus the firing sequence and the sequence of states form a counterexample to the safety specification $\spec_\safe$.
\end{proof}

\begin{theorem}[Completeness]
  \label{thm:completeness}
  If the \proglang program $\prog$ is correct,
  then the corresponding Petri program $\petriProg{\beta}{\prog}$ satisfies its safety specification.
\end{theorem}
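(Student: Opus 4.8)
The plan is to prove the contrapositive: assuming that $\petriProg{\beta}{\prog}$ violates its safety specification $\spec_\safe$, I would construct an erroneous execution of $\prog$, thereby witnessing that $\prog$ is incorrect. This is the exact counterpart to the soundness argument (\cref{thm:soundness}), which turns an erroneous execution into a safety counterexample via \cref{lem:exec-fire}; here I run the correspondence in the opposite direction, using \cref{lem:fire-exec}.

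Concretely, a violation of $\spec_\safe$ yields a counterexample: a firing sequence $m_0 \fire{\tilde{\st_1}} \ldots \fire{\tilde{\st_n}} m_n$ together with states $\sigma_0,\ldots,\sigma_n \in \petristate$ satisfying $(\sigma_{i-1},\sigma_i)\in\sem{\tilde{\st_i}}$ for all $i$, and with $m_n \cap \spec_\safe \neq \emptyset$, i.e.\ some place $\langle\lightning,\vartheta,k\rangle \in m_n$. I would then apply \cref{lem:fire-exec} to convert this firing sequence into an execution $\conf{m_0}{\sigma_0} \semtrans{\st_1} \ldots \semtrans{\st_n} \conf{m_n}{\sigma_n}$ of $\prog$. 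Finally, since $\langle\lightning,\vartheta,k\rangle \in m_n$, the definition of $\confOp$ places a local configuration of the form $\langle\lightning,\vartheta,t,s\rangle$ into the multiset of $\conf{m_n}{\sigma_n}$; hence the execution is erroneous and $\prog$ is incorrect.

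The step that requires care is the application of \cref{lem:fire-exec}, whose hypothesis demands that none of the markings $m_i$ contains an $\insuff{\vartheta}$ place, something a raw safety counterexample need not satisfy. I would therefore first invoke the lemma preceding \cref{thm:thread-bound-iff}, which shows that any counterexample to $\spec_\safe$ can be replaced by one that still reaches a bad place yet never deposits a token in any $\insuff{\vartheta}$; this works because $\insuff{\vartheta}$ places have no outgoing transitions, so the firing that enters such a place neither enables further transitions nor changes the state (it is an \texttt{assume true}) and can simply be elided. After this normalization \cref{lem:fire-exec} applies directly, and the only remaining routine check is that $\confOp$ faithfully maps the place $\langle\lightning,\vartheta,k\rangle$ to a local configuration carrying the failure marker $\lightning$, which is immediate from its definition. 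I expect the $\insuff$-elimination to be the only genuine subtlety; everything else is a direct transcription of the firing-sequence/execution correspondence established earlier.
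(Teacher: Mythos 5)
Your proposal is correct and follows essentially the same route as the paper's own proof: a contrapositive argument that normalizes the safety counterexample to avoid $\insuff{\vartheta}$ places (via the auxiliary lemma whose elision argument you correctly reproduce), then applies \cref{lem:fire-exec} and reads off erroneousness from the definition of $\confOp$. In fact, you make explicit the ``wlog'' step that the paper's proof leaves implicit, which is exactly the right point to flag as the only genuine subtlety.
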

\begin{proof}
  Contrapositively, let us suppose that $\petriProg{\beta}{\prog}$ does not satisfy its safety specification.
  Then there exists a firing sequence $m_0 \fire{\tilde{\st_1}} \ldots \fire{\tilde{\st_n}} m_n$
  and states $\sigma_0,\ldots,\sigma_n$ such that $(\sigma_{i-1}, \sigma_i)\in\sem{\tilde{\st_i}}$ for all $i$,
  with some $\langle \lightning, \vartheta, k \rangle \in m_n$.
  Wlog.\ we can assume that the firing sequence does not run into a place $\insuff{\vartheta'}$.
  By \cref{lem:fire-exec},
  we know that $\conf{m_0}{\sigma_k} \semtrans{\st_1} \ldots \semtrans{\st_n} \conf{m_n}{\sigma_n}$
  is an execution, for some $\st_1,\ldots,\st_n$.
  By definition of $\confOp$,
  it follows that the execution is erroneous,
  i.e., $\langle \lightning, \vartheta, t, s \rangle \in \conf{m_n}{\sigma_n}$.
  Thus $\prog$ is incorrect.
\end{proof}

\begin{toappendix}
\begin{figure}[b]
\begin{align*}
  \langle \texttt{assume e}; X,\vartheta, k \rangle &\petrirel{\texttt{assume e}} \langle X,\vartheta, k \rangle\\[0.5em]
  \langle \texttt{x:=e}; X,\vartheta, k \rangle &\petrirel{\texttt{x:=e}} \langle X, \vartheta,k \rangle\\[1em]
  \langle \texttt{assert e}; X, \vartheta, k \rangle &\petrirel{\texttt{assume e}}  \langle X, \vartheta, k \rangle\\[0.5em]
  \langle \texttt{assert e}; X, \vartheta, k \rangle &\petrirel{\texttt{assume !e}} \langle \lightning, \vartheta, k \rangle\\[1em]
  \langle \texttt{if\,($e$)\,\{\,$C_1$\,\}\,else\,\{\,$C_2$\,\}}; X, \vartheta, k \rangle &\petrirel{\texttt{assume e}}  \langle C_1; X,\vartheta, k \rangle\\[0.5em]
  \langle \texttt{if\,($e$)\,\{\,$C_1$\,\}\,else\,\{\,$C_2$\,\}}; X, \vartheta, k \rangle &\petrirel{\texttt{assume !e}} \langle C_2; X,\vartheta, k \rangle\\[1em]
  \langle \texttt{while\,($e$)\,\{\,$C$\,\}}; X, \vartheta, k \rangle &\petrirel{\texttt{assume e}}  \langle \texttt{$C$\,;\,while\,($e$)\,\{\,$C$\,\}}; X,\vartheta, k \rangle\\[0.5em]
  \langle \texttt{while\,($e$)\,\{\,$C$\,\}}; X, \vartheta, k \rangle &\petrirel{\texttt{assume !e}} \langle X, \vartheta, k \rangle\\[1em]
  \inuse{\vartheta}{1},\ldots,\inuse{\vartheta}{k-1},\notinuse{\vartheta}{k},\langle \texttt{fork e $\vartheta$()}; X,\vartheta', k' \rangle &\petrirel{\texttt{\idvar{\vartheta}{k}:=e}} \langle \body{\vartheta}, \vartheta, k \rangle, \langle X, \vartheta', k' \rangle,\inuse{\vartheta}{1},\ldots,\inuse{\vartheta}{k}
  \\[0.5em]
  \inuse{\vartheta}{1},\ldots,\inuse{\vartheta}{\beta},\langle \texttt{fork e $\vartheta$()}; X, \vartheta', k \rangle &\petrirel{\texttt{assume true}} \insuff{\vartheta}\\[1em]
  \inuse{\vartheta}{j}, \langle \Omega, \vartheta,k \rangle, \langle \texttt{join e}; X, \vartheta', k' \rangle &\petrirel{\texttt{assume \idvar{\vartheta}{k}==e}} \langle X, \vartheta', k' \rangle, \notinuse{\vartheta}{k}
  \\[1em]
  %
\end{align*}
\caption{The full definition of the Petri transition relation $\petrirel{\cdot}$.}
\label{fig:full-petrification-trans}
\end{figure}
\end{toappendix}

\section{Verifying Programs through Repeated Petrification}
\label{sec:verification}
The previous section shows that one can verify a $\proglang$ program $\prog$
by picking a suitable thread limit $\beta$,
and proving that the petrified program $\petriProg{\beta}{\prog}$ satisfies both its safety and its bound specification.
This gives rise to several possible verification algorithms,
illustrated in \cref{fig:algorithms}.
%
%
\begin{figure}[t]
\begin{subfigure}{0.48\textwidth}
\begin{tikzpicture}[thick,baseline]
  \node[draw] (safe?) {$\petriProg{\beta}{\prog} \models \spec_\mathsf{safe}$ ?};
  \node[draw,below=1cm of safe?] (bound?) {$\petriProg{\beta}{\prog} \models \spec_\mathsf{bound}$ ?};
  \node[right=1cm of safe?] (incorrect) {incorrect};
  \node[right=1cm of bound?] (correct) {correct};

  \draw[<-] (safe?) -- node[auto]{$\beta \gets 1$} ++(up:1cm);
  \draw[->] (safe?) -- node[auto]{\textbf{no}} (incorrect);
  \draw[->] (safe?) -- node[auto,swap]{\textbf{yes}} (bound?);
  \draw[->] (bound?) -- node[auto]{\textbf{yes}} (correct);
  \draw[->] (bound?) edge[bend right=35] node[auto,swap]{\textbf{no}, $\beta\gets \beta+1$} (safe?);
\end{tikzpicture}
\hfill
\caption{Algorithm 1}
\end{subfigure}
\begin{subfigure}{0.48\textwidth}
\hfill
\begin{tikzpicture}[thick,baseline]
  \node[draw] (bound?) {$\petriProg{\beta}{\prog} \models \spec_\mathsf{bound}$ ?};
  \node[draw,below=1cm of bound?] (safe?) {$\petriProg{\beta}{\prog} \models \spec_\mathsf{safe}$ ?};
  \node[below=0.5cm of safe?,xshift=-0.75cm] (incorrect) {incorrect};
  \node[below=0.5cm of safe?,xshift=0.75cm] (correct) {correct};

  \draw[<-] (bound?) -- node[auto]{$\beta \gets 1$} ++(up:1cm);
  \draw[->] (bound?) edge[out=-30,in=30,looseness=8] node[auto,swap,align=left]{\textbf{no},\\$\beta\gets \beta+1$} (bound?);
  \draw[->] (safe?.south) ++(left:0.75cm) -- node[auto,swap]{\textbf{no}} (incorrect);
  \draw[->] (bound?) -- node[auto,swap]{\textbf{yes}} (safe?);
  \draw[->] (safe?.south) ++(right:0.75cm) -- node[auto]{\textbf{yes}} (correct);
\end{tikzpicture}
\caption{Algorithm 2}
\end{subfigure}
\begin{subfigure}{\textwidth}
\vspace{-5mm}
\begin{tikzpicture}[thick,baseline]
  \node[draw] (check?) {$\petriProg{\beta}{\prog} \models \spec_\mathsf{safe} \cup \spec_\mathsf{bound}$ ?};
  \node[below=1cm of check?,xshift=-2.5cm] (incorrect) {incorrect};
  \node[below=1cm of check?,xshift=2.5cm] (correct) {correct};

  \draw[<-] (check?) -- node[auto]{$\beta \gets 1$} ++(up:1cm);
  \draw[->] (check?) edge[loop below] node[auto,align=center]{\textbf{no:}\\\textbf{ctex violates $\spec_\mathsf{bound}$},\\ $\beta\gets \beta+1$} ();
  \draw[->] (check?) -| node[auto,swap,align=right]{\textbf{no:}\\\textbf{ctex violates $\spec_\mathsf{safe}$}} (incorrect);
  \draw[->] (check?) -| node[auto]{\textbf{yes}} (correct);
\end{tikzpicture}
\caption{Algorithm 3}
\label{fig:algo3}
\end{subfigure}
  \caption{Three iterative algorithms that reduce the verification problem for a \proglang program $\prog$ to (several instances of) the verification problem for Petri programs.}
  \label{fig:algorithms}
\end{figure}
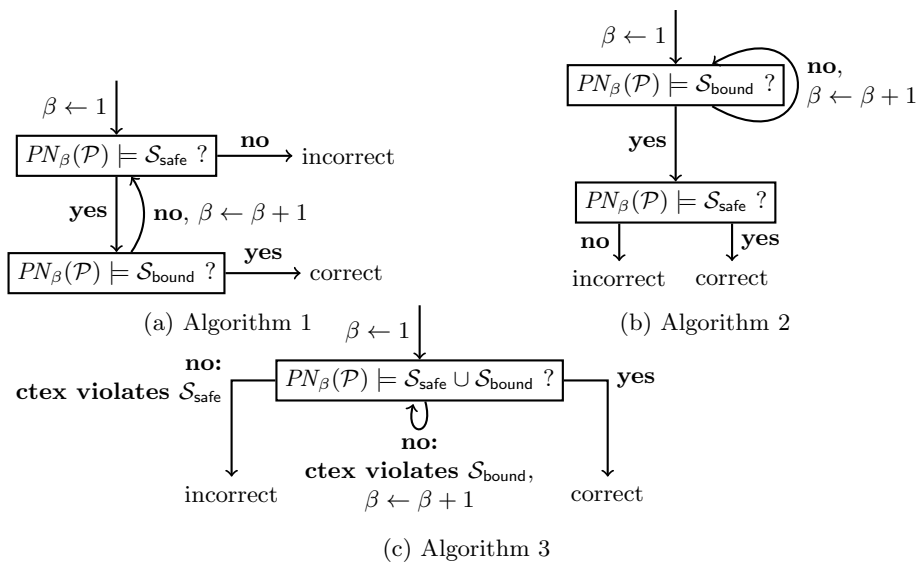
%
Each algorithm proves correctness of a given program $\prog$
by iteratively determining a suitable thread limit $\beta$.
In each iteration, the algorithms invoke a Petri program verification algorithm~\cite{vmcai2021}
to determine if the petrified program $\petriProg{\beta}{\prog}$ satisfies some given specification.

Algorithm~1 first checks safety,
and only if the petrified program satisfies the safety specification,
the algorithm checks the bound specification.
If the bound specification is violated, the thread limit $\beta$ is increased.
The safety specification must then be checked again, as it may be violated for the increased $\beta$.

By contrast, Algorithm~2 first determines the thread width $\beta$ of the program,
by repeatedly checking if the petrified program satisfies the bound specification
and, if appropriate, incrementing the thread limit.
Only after the thread width has been established, the petrified program is checked against the safety specification.

Finally, Algorithm~3 combines the check of both specifications,
by checking if the petrified program satisfies their union.
If this is not the case, the counterexample returned by the verification is examined to determine if the program is incorrect,
or if the thread limit should be increased.

%
\begin{theorem}[Correctness]
  For any of the algorithms in \cref{fig:algorithms},
  if the algorithm terminates for a given program $\prog$,
  then
  \begin{itemize}
  \item if the output is ``correct'', the program $\prog$ is correct;
  \item and if the output is ``incorrect'', the program $\prog$ is incorrect.
  \end{itemize}
\end{theorem}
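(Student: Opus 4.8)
The plan is to reduce the claim for each of the three algorithms to results already established: the Soundness theorem (\cref{thm:soundness}), the Completeness theorem (\cref{thm:completeness}), and the Thread Width Detection theorem (\cref{thm:thread-bound-iff}). Since each algorithm only ever emits a verdict after verifying one or more Petri program specifications for a concrete thread limit $\beta$, the argument amounts to a case analysis on which specification checks justify each terminating branch. I would treat the ``correct'' and ``incorrect'' outputs separately, and within each, walk through all three control-flow graphs of \cref{fig:algorithms}.

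For the ``correct'' branches, I would observe that in all three algorithms a ``correct'' verdict is reached only once, for the current $\beta$, both $\petriProg{\beta}{\prog} \models \spec_\safe$ and $\petriProg{\beta}{\prog} \models \spec_\bound$ have been established. In Algorithms~1 and~2 these two facts are witnessed by two separate verifier calls (in either order). In Algorithm~3 the single check $\petriProg{\beta}{\prog} \models \spec_\safe \cup \spec_\bound$ succeeds; here I would invoke the observation that satisfaction of the union implies satisfaction of each component, because any counterexample to $\spec_\safe$ (resp.\ $\spec_\bound$) reaches a place in $\spec_\safe \subseteq \spec_\safe \cup \spec_\bound$ (resp.\ $\spec_\bound \subseteq \spec_\safe \cup \spec_\bound$) and is therefore already a counterexample to the union. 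In every case, \cref{thm:soundness} then yields that $\prog$ is correct.

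For the ``incorrect'' branches, the common denominator is that the verifier has exhibited a counterexample to $\spec_\safe$, i.e.\ $\petriProg{\beta}{\prog} \not\models \spec_\safe$. In Algorithms~1 and~2 this is immediate from the failing safety check. In Algorithm~3 I would note that the returned counterexample to $\spec_\safe \cup \spec_\bound$ reaches some bad place $p \in \spec_\safe \cup \spec_\bound$, and that the ``incorrect'' branch is taken precisely when $p \in \spec_\safe$; the same firing sequence and state sequence then form a counterexample to $\spec_\safe$ alone. In each case the contrapositive of \cref{thm:completeness} gives that $\prog$ is incorrect.

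The argument is essentially bookkeeping over the control flow of the three algorithms, so I do not expect a genuine obstacle: termination is assumed in the hypothesis, and all semantic content is already packaged in the previously proved theorems. The one point requiring a moment's care is the decomposition of the combined specification in Algorithm~3, namely that $\N \models \spec_\safe \cup \spec_\bound$ is equivalent to $\N \models \spec_\safe$ together with $\N \models \spec_\bound$, and dually that a counterexample to the union is a counterexample to whichever component contains the reached bad place. Note also that the bound specification and \cref{thm:thread-bound-iff} only serve to justify the $\beta$-increment branches, which are non-terminating and hence irrelevant to this correctness statement.
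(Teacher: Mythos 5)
Your proposal is correct and follows essentially the same route as the paper's proof, which likewise dispatches all three algorithms by appealing to Soundness (\cref{thm:soundness}) for the ``correct'' verdicts, Completeness (\cref{thm:completeness}, contrapositively) for the ``incorrect'' verdicts, and the observation that a Petri program satisfies $\spec_\safe \cup \spec_\bound$ iff it satisfies both specifications individually. The only cosmetic difference is that the paper's one-line proof also cites \cref{thm:thread-bound-iff}, which, as you correctly observe, is not strictly needed for the verdict-correctness claim itself (it enters only indirectly, through the proof of \cref{thm:soundness} and through the termination analysis).
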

\begin{proof}
  The result is straightforward by application of \cref{thm:thread-bound-iff,thm:soundness,thm:completeness}.
  For the third algorithm, we also note that a Petri program satisfies the union of two specifications iff
  it satisfies both specifications individually.
\end{proof}

%
\begin{theorem}[Relative Termination]
  Given a program $\prog$ with a finite thread width $\beta$,
  if each invocation of the Petri program verification algorithm~\cite{vmcai2021} terminates,
  then all the verification algorithms in \cref{fig:algorithms}
  terminate after at most $\beta$ iterations.
\end{theorem}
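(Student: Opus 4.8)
The plan is to reduce the claimed iteration bound to a single structural observation about the loop variable, and then to read off the relevant behaviour from the control flow of each algorithm in \cref{fig:algorithms}. To avoid clashing with the program's thread width, I will write $b$ for the thread limit manipulated by the algorithms (the quantity denoted $\beta$ inside the figures) and reserve $\beta$ for the thread width of $\prog$. In all three algorithms $b$ is initialized to $1$ and, in every iteration, is either left untouched (and the algorithm halts) or incremented by exactly $1$. Consequently, in the $i$-th iteration we have $b = i$, and the statement ``terminates after at most $\beta$ iterations'' is equivalent to the statement that $b$ never exceeds $\beta$. The entire argument therefore hinges on identifying, for each algorithm, precisely when $b$ is incremented.

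First I would invoke \cref{thm:thread-bound-iff}: for any thread limit $b$, we have $\petriProg{b}{\prog} \models \spec_\bound$ iff the thread width of $\prog$ is at most $b$. Since the thread width is exactly $\beta$, the bound specification $\spec_\bound$ is violated for every $b < \beta$ and satisfied for every $b \geq \beta$; in particular it holds at $b = \beta$. The central step is then to establish the invariant that each algorithm increments $b$ only in an iteration where $\petriProg{b}{\prog} \not\models \spec_\bound$, i.e.\ only while $b < \beta$. Combined with the hypothesis that every invocation of the Petri program verification algorithm terminates (which ensures each individual iteration actually completes), this invariant immediately yields that $b$ is incremented at most $\beta - 1$ times and hence never exceeds $\beta$.

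For Algorithms~1 and~2 this invariant is read off directly from the control flow, since the increment occurs exactly on the \textbf{no} branch of the bound check $\petriProg{b}{\prog} \models \spec_\bound$. I expect the main obstacle to be Algorithm~3, which checks the combined specification $\spec_\safe \cup \spec_\bound$ and increments $b$ only when the returned counterexample violates $\spec_\bound$. Here I would argue that a counterexample can violate $\spec_\bound$ only by reaching some place $\insuff{\vartheta}$, which is reachable precisely when $\petriProg{b}{\prog} \not\models \spec_\bound$, and hence by \cref{thm:thread-bound-iff} only when $b < \beta$. The verifier may instead return a $\spec_\safe$-violating counterexample, but that branch terminates the algorithm with output ``incorrect'' and so cannot generate further iterations; thus the disambiguation policy between the two branches is irrelevant to the bound.

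Finally I would close the argument at the boundary $b = \beta$, where $\spec_\bound$ is satisfied. In Algorithms~1 and~2 the bound check then succeeds, leading to ``correct'' (or the algorithm already stopped with ``incorrect'' at the preceding safety check), so no further increment occurs. In Algorithm~3, since $\petriProg{\beta}{\prog} \models \spec_\bound$ no place $\insuff{\vartheta}$ is reachable, so any counterexample to $\spec_\safe \cup \spec_\bound$ must violate $\spec_\safe$; hence the iteration outputs either ``correct'' or ``incorrect'' and the loop halts. This shows that every algorithm terminates no later than the iteration with thread limit $\beta$, i.e.\ within $\beta$ iterations, as claimed.
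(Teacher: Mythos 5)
Your proof is correct. The paper in fact states this theorem \emph{without} a proof, and your argument is exactly the intended justification it leaves implicit: the invariant that the thread limit is incremented only in an iteration where the bound specification is violated, which by \cref{thm:thread-bound-iff} (applied with thread width exactly $\beta$) can happen only while the limit is strictly below $\beta$, together with the correct handling of the one subtle point — in Algorithm~3 a counterexample triggering an increment must reach a place of $\spec_\bound$ and is therefore itself a counterexample to $\spec_\bound$, so the disambiguation policy for counterexamples touching both specifications cannot affect the iteration bound.
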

The first algorithm may terminate earlier if $\prog$ is incorrect,
and in fact, it even terminates if $\prog$ is incorrect but has infinite thread width:
If an erroneous execution exists, this execution has some thread width $\beta$.
Thus, in iteration $\beta$ at the latest, the petrified program $\petriProg{\beta}{\prog}$ does not satisfy the safety specification $\spec_\mathsf{safe}$,
and Algorithm~1 terminates.
Algorithm~2 never terminates for programs with infinite thread width.
For Algorithm~3, termination is not guaranteed and depends on the counterexample selection of the underlying Petri program verification algorithm.



\section{Application: Verification of C Programs}
\label{sec:implementation}
We implemented petrification, as well as the three verification algorithms of \cref{fig:algorithms},
in the program analysis framework \toolname~\cite{ultimate-website}.
Our implementation consumes C programs that use the POSIX threads (pthreads) API~\cite{pthreads} for dynamic thread management. We presume an execution model that satisfies sequential consistency.

\subsection{Translation from pthreads to \proglang Programs}
%
The POSIX threads extension includes many features, including mutexes and condition variables. Our implementation supports a subset of these features through a symbolic encoding in \proglang using assume statements.
Here, we focus on the features relevant to dynamic thread management, specifically:
\begin{itemize}
    \item There are unique thread IDs of type \texttt{pthread\_t}.\medskip

    In  \proglang, to ensure unique thread IDs,
    we introduce a global integer variable \texttt{freshId} that is incremented after every \texttt{fork}.\medskip
    \item The function \texttt{pthread\_create(id, attr, f, arg)} creates a new thread.
    It takes a pointer \texttt{id} to \texttt{pthread\_t} (the unique thread ID will be stored at this address),
    some attributes \texttt{attr}, 
    the function \texttt{f} which serves as thread template,
    and an argument \texttt{arg} (a pointer) passed to the function \texttt{f}.
    \medskip

    In \proglang, a call to \texttt{pthread\_create(id, attr, f, arg)} is translated to three statements.
          First, we write the current value \texttt{freshId} to the location of the pointer \texttt{id}.
          Second, we execute the statement \texttt{fork freshId f(arg)}.
          Finally, we increment \texttt{freshId}.

          Our implementation allows thread templates (created from C functions) to take parameters,
                and extends the syntax and semantics of \proglang  to allow
     passing such parameters in \texttt{fork} statements.
          %
          %
          We do not support non-standard attributes, so \texttt{attr} must be \texttt{NULL}.\medskip

    %
    %
    \item The function \texttt{pthread\_join(id, ret)} waits for the thread with the given ID to terminate.
    It takes the thread ID \texttt{id} (of type \texttt{pthread\_t}) and a pointer \texttt{ret} to store the return value of the function to be joined
    (if \texttt{ret} is \texttt{NULL}, the return value is discarded).
    %
    \medskip

    In \proglang, a call to \texttt{pthread\_join(id, ret)} is translated to the \proglang statement \texttt{join~id}.
          To cover the case that \texttt{ret} is not \texttt{NULL},
          our implementation supports  statements of the form \texttt{join id assigns x},
          which store the thread's return value in the variable \texttt{x}.
\end{itemize}

\begin{figure}[t]
\begin{minipage}[t]{0.585\textwidth}
\begin{lstlisting}
int c, i;


void *w(void *x) {
  c += i;
  assert(c <= 2 * i);
  c -= i;
}


int main() {
  pthread_t ids[10000];
  while (i < 10000) {
    pthread_create(&ids[i], NULL, w, NULL);
    if (i > 0) {
      pthread_join(ids[i-1], NULL);
    }
    i++;
  }
  return 0;
}\end{lstlisting}
\vspace{2.3mm}
\caption{C program using the pthreads API}
\label{fig:example_pthread}
\end{minipage}
\hfill
\begin{minipage}[t]{0.394\textwidth}
    \hfill\begin{subfigure}[t]{0.95\textwidth}
\begin{lstlisting}
c := 0;
i := 0;
while (i < 10000) {
  ids[i] := freshId;
  fork freshId w();
  freshId := freshId + 1;
  if (i > 0) {
    join ids[i-1];
  }
  i := i + 1;
}
\end{lstlisting}
\vspace{-2mm}
        \caption{The \texttt{main} thread}
    \end{subfigure}\\

    \hfill\begin{subfigure}[t]{0.95\textwidth}
\begin{lstlisting}
c := c + i;
assert c <= 2 * i;
c := c - i;
\end{lstlisting}
\vspace{-2mm}
        \caption{The worker thread \texttt{w}}
    \end{subfigure}
\caption{The representation of the program from \cref{fig:example_pthread} in \proglang}
\label{fig:example_pthread:translated}
\end{minipage}
\end{figure}
%
%
%
%
%
%
\Cref{fig:example_pthread} shows a version of our example program from \cref{fig:example},
written in C using the pthreads API.
The \texttt{main} thread creates 10\,000 instances of the thread~\texttt{w} using \texttt{pthread\_create},
and stores the unique thread IDs in the array \texttt{ids}. 
In each iteration, the thread created in the the previous iteration is joined with \texttt{pthread\_join(ids[i-1], NULL)}.
\Cref{fig:example_pthread:translated} shows the translation of the program to \proglang as described above. 
Petrification of this \proglang program with thread limit $\beta=2$ yields a Petri program similar to \cref{fig:petri-net}.
\toolname can then prove that this Petri program satisfies the bound and safety specification.
Thus the C program in \cref{fig:example_pthread} is correct and has thread width $2$.

\subsection{Practical Performance of the Approach}
The \toolname framework contains three verification tools that verify Petri programs:
\textsc{Automizer}~\cite{svcomp23:automizer-commuhash,vmcai2021}, \textsc{Taipan}~\cite{taipan,sas17:taipan} and \textsc{GemCutter}~\cite{pldi22:sound-seq,gemcutter}.
\toolname can encode different specifications via \texttt{assert} statements, such as unreachability of an error function, or absence of certain undefined behaviours. 
\begin{table}[b]
\vspace*{-2em}
    \centering
    \caption{Comparison of SV-COMP'23~\cite{beyer:svcomp23} results}
    \label{tbl:eval-results2}
    \setlength{\tabcolsep}{6pt}
    \begin{tabular}{l|rrr|rrr|rrr}
        & \multicolumn{3}{c|}{\textsc{Automizer}} & \multicolumn{3}{c|}{\textsc{CPAchecker}} & \multicolumn{3}{c}{\textsc{Goblint}}  \\
        &    & time & mem  &    & time & mem  &    & time & mem \\
        & \# & (h)  & (GB) & \# & (h)  & (GB) & \# & (h)  & (GB) \\
        \hline
        total (2\,865) & 1\,516 & 35.3     & 1\,590        & 973  & 16.1      & 1\,210     & 847 & 0.4      & 29          \\
        \quad safe   & 1\,227 & 30.6     & 1\,300        & 712  & 10.8      & 900      & 847 & 0.4      & 29          \\
        \quad unsafe  &  289 &  4.8     & 290         & 261  & 5.3       & 310      & 0   & 0.0      & 0
    \end{tabular}
\end{table}

The feasibility of our presented approach is demonstrated by the success of the \toolname tools in the \emph{ConcurrencySafety} category of the \emph{International Competition on Software Verification} (SV-COMP'23)~\cite{beyer:svcomp23}.
In this category, verification tools had to check 2865 verification tasks,
each consisting of a concurrent C program and one of four different specifications:
unreachability of a call to a distinguished error function,
absence of data races,
absence of invalid pointer dereferences and other memory safety issues,
and absence of signed integer overflow.
The \toolname tools participated (using Algorithm~3 as shown in \cref{fig:algo3}) and occupied the 2\textsuperscript{nd} (\textsc{Automizer}), 3\textsuperscript{rd} (\textsc{GemCutter}) and 4\textsuperscript{th} (\textsc{Taipan}) place in the \textit{ConcurrencySafety} category,
with 1\textsuperscript{st} place taken by the bounded model checker \textsc{Deagle}~\cite{deagle}.
Thus the \toolname tools placed ahead of all other tools that soundly verify concurrent programs.
\Cref{tbl:eval-results2} shows an extract of the competition results,
comparing \textsc{Automizer} against the next best sound verification tools in the \textit{ConcurrencySafety} category, \textsc{CPAchecker}~\cite{cpachecker} and \textsc{Goblint}~\cite{goblint:sv-comp}.


We also evaluated \toolname's implementation of all three verification algorithms from \cref{fig:algorithms} on the SV-COMP'23 benchmark set, with \textsc{Automizer}~\cite{vmcai2021} as a backend.
This evaluation was performed using the \textsc{BenchExec} benchmarking tool~\cite{beyer:benchexec} on an AMD Ryzen Threadripper 3970X 32-Core processor, with a timeout of 15\,min and a memory limit of 16\,GB.
\Cref{tbl:eval-results} shows the results: 
Algorithm~1 succeeds on the largest number of unsafe benchmarks and overall.
For unsafe benchmarks it succeeds on a strict superset of the verification tasks for which Algorithms~2 and 3 found a bug.
However, Algorithm~2 and Algorithm~3 are both able to verify a few more safe benchmarks.
The results are similar, likely because
most programs in the benchmark set either have thread width $1$ (but multiple thread templates),
or they are safe and have infinite thread width.

\begin{table}[t]
\centering
  \caption{Comparison of the algorithms from \cref{fig:algorithms}}
  \label{tbl:eval-results}
  \setlength{\tabcolsep}{6pt}
  \begin{tabular}{l|rrr|rrr|rrr}
    & \multicolumn{3}{c|}{Algorithm 1} & \multicolumn{3}{c|}{Algorithm 2} & \multicolumn{3}{c}{Algorithm 3}\\
    &    & time & mem  &    & time & mem  &    & time & mem \\
    & \# & (h)  & (GB) & \# & (h)  & (GB) & \# & (h)  & (GB) \\
    \hline
    total (2\,865) & 1\,580 & 26.1     & 2\,208        & 1\,571  & 26.6     & 2\,227        & 1\,573 & 26.1     & 2\,176        \\
    \quad safe   & 1\,224 & 22.2     & 1\,700        & 1\,230  & 22.9     & 1\,760        & 1\,225 & 22.3     & 1\,690        \\
    \quad unsafe & 356  & 4.0      & 508         & 341   & 3.7      & 467         & 348  & 3.8      & 486
  \end{tabular}
\end{table}


\section{Conclusion}
\label{sec:conclusion}
We address the verification of programs with dynamic thread management, i.e., programs that fork and join threads at runtime.
Our main contributions are:
\begin{itemize}
 \item We present a reduction from the verification of programs with dynamic thread management
 to the verification of programs with a fixed number of threads.
 Our approach determines the maximum number of threads that are active at the same time (the program's \emph{thread width})
 and verifies that the program indeed satisfies this bound.
 \item
 %
 We formalize our approach using Petri programs, an existing model for concurrent programs with a fixed number of threads,
 and a simple programming language that supports dynamic thread management.
 %
 Our approach is sound, and it is
 relatively complete for programs with a finite thread width.

 \item We
 implemented our approach as a 
 verification tool for C programs.
 Our implementation verifies programs that use the POSIX threads (pthreads) API,
 and
 competes with the best verifiers for concurrent programs at the International Competition on Software Verification (SV-COMP'23)~\cite{beyer:svcomp23}.
\end{itemize}
\goodbreak

\bibliographystyle{splncs04}
\bibliography{references}
\end{document}